\def\eqnarray{\stepcounter{equation}\let\@currentlabel=\theequation
\global\@eqnswtrue
\global\@eqcnt\z@\tabskip\@centering\let\\=\@eqncr
$$\halign to \displaywidth\bgroup\@eqnsel\hskip\@centering
  $\displaystyle\tabskip\z@{##}$&\global\@eqcnt\@ne 
  \hfil$\;{##}\;$\hfil
  &\global\@eqcnt\tw@ $\displaystyle\tabskip\z@{##}$\hfil 
   \tabskip\@centering&\llap{##}\tabskip\z@\cr}
\newtheorem{thm}{Theorem}
\newtheorem{lem}{Lemma}
\newtheorem{prob}{Problem}
\newenvironment{proof}{\noindent{\it Proof.~~}}{\qed\medskip}
\def\vc#1{\mbox{\boldmath $#1$}}
\newcommand{\qed}{\hspace*{\fill}$\Box$}
\newcommand{\rmd}{\mathrm{d}}
\newcommand{\rmN}{\mathrm{N}}
\newcommand{\rmL}{\mathrm{L}}
\newcommand{\calL}{\mathcal{L}}
\newcommand{\calN}{\mathcal{N}}
\newcommand{\calV}{\mathcal{V}}
\newcommand{\calG}{\mathcal{G}}
\newcommand{\calU}{\mathcal{U}}
\newcommand{\calP}{\mathcal{P}}
\newcommand{\bbR}{\mathbb{R}}
\newcommand{\bbZ}{\mathbb{Z}}
\newcommand{\bbE}{\mathbb{E}}
\newcommand{\PP}{\mathsf{P}}
\newcommand{\SINR}{\mathsf{SINR}}
\newcommand{\LoS}{\mathrm{LoS}}
\newcommand{\NLoS}{\mathrm{NLoS}}
\newcommand{\GP}{\mathrm{GP}}
\newcommand{\dd}[1]{\if#11 1\!\!1 
\else {\if#1C I\!\!\!C
\else {\if#1G I\!\!\!G 
\else {\if#1J J\!\!\!J 
\else {\if#1S S\!\!\!S
\else {\if#1Z Z\!\!\!Z
\else {\if#1Q O\!\!\!\!Q
\else I\!\!#1
\fi} 
\fi}
\fi}
\fi} 
\fi} 
\fi} 
\fi} 
\DeclareSymbolFont{bbold}{U}{bbold}{m}{n}
\DeclareSymbolFontAlphabet{\mathbbold}{bbold}
\newcommand{\onev}{\mathbbold{1}}
\begin{document}
%
%

%

\title{Distributed Collaborative 3D-Deployment of UAV Base Stations
for On-Demand Coverage}

\author{\IEEEauthorblockN{Tatsuaki Kimura\IEEEauthorrefmark{1}
and 
Masaki Ogura\IEEEauthorrefmark{2}}
\IEEEauthorblockA{\IEEEauthorrefmark{1}Department of Information and Communications Technology,
Osaka University, Osaka, Japan, }
\IEEEauthorblockA{\IEEEauthorrefmark{2}Department of Bioinformatic Engineering, 
Osaka University, Osaka, Japan,\\
Emails:  kimura@comm.eng.osaka-u.ac.jp,  m-ogura@ist.osaka-u.ac.jp}}



\maketitle
\begin{tikzpicture}[remember picture, overlay]
\node[text width=22.5cm, align=center] at ($(current page.north) + (0.0,-0.2in)$) 
{\scriptsize \copyright 2020 IEEE. Personal use of this material is permitted. Permission from IEEE must be obtained for all other uses, in any current or future media, including reprinting/republishing this material for advertising or promotional purposes, creating new collective works, for resale or redistribution to servers or lists, or reuse of any copyrighted component of this work in other works.};
\end{tikzpicture}


\begin{abstract}
Deployment of unmanned aerial vehicles (UAVs) performing
as flying aerial base stations (BSs) has a great potential of
adaptively serving ground users 
during temporary events, such as major disasters and massive events. 
%
However, planning an efficient, dynamic, and 3D deployment of UAVs 
in adaptation to dynamically and spatially varying 
ground users is a highly complicated problem 
due to the complexity in air-to-ground channels and interference among UAVs. 
%
%
%
In this paper, we propose a novel distributed 3D deployment method for UAV-BSs
in a downlink network for on-demand coverage. 
Our method consists mainly of the following two parts: 
\textit{\textbf{sensing-aided crowd density estimation}}
and \textit{\textbf{distributed push-sum algorithm}}. 
The first part estimates the ground user density
from its observation through on-ground sensors,
thereby allowing us to avoid the computationally 
intensive process of obtaining 
the positions of all the ground users. 
On the basis of the estimated user density, 
in the second part, 
each UAV dynamically updates its 3D position 
in collaboration with its neighboring UAVs for maximizing the total coverage. 
We prove the convergence of our distributed algorithm
by employing a distributed push-sum algorithm framework. 
%
Simulation results demonstrate that our method can 
improve the overall coverage with a limited number of ground sensors. 
We also demonstrate that 
our method can be applied to a dynamic network in which the density of ground users varies temporally.


\end{abstract}


%
\IEEEpeerreviewmaketitle

\section{Introduction}
Unmanned aerial vehicle (UAV)-enabled networks have been 
gaining substantial attention because of their wide variety of
applications including surveillance, military, and rescue operations~\cite{Vala14}. 
In particular, deployment of UAVs performing as 
flying aerial base stations (BSs) to support existing cellular networks
is a key application~\cite{Moza19}. 
%
Typically, UAV-BSs can establish
a line-of-sight (LoS) connection to on-ground users
with high probability owing to their high altitudes. 
Thus, the user coverage can
be improved significantly in an efficient manner~\cite{Bor16b}. 
%
In addition, since UAVs have high autonomous mobility, 
UAV-BSs can provide connections to on-ground users 
in disaster areas (e.g., flood- and earthquake-affected areas), or 
rural areas more robustly and cost-effectively than 
ground BSs of cellular networks~\cite{Bor16}. 
Furthermore, rapid and flexible deployment of UAV-BSs
can enable them to respond to the occurrence of hotspots 
in sports events and open-air concerts
and can help achieve on-demand coverage for these.  
%

Despite the potential benefits of UAV-BSs, 
determining the most effective deployment of UAVs poses
several research challenges~\cite{Moza19}. 
First, owing to the flexible mobility of UAVs, 
the 3D deployment problem of UAVs exhibits a higher degree of freedom and is 
more complicated than that of ground BSs. 
%
Furthermore, air-to-ground (A2G) channels have
different characteristics 
from a terrestrial one because they are highly 
dependent on the altitudes of the UAVs and 
the blockage effect of obstacles (e.g., buildings)~\cite{Hour14, Hour14b}. 
%
Moreover, the deployment of multiple UAV-BSs induces
an inter-cell interference problem, which may degrade  
the communication quality of users. Since the interference 
received from each UAV  depends additionally on the A2G channel condition, 
the overall characteristics of interference and 
the signal-to-interference-plus-noise-ratio (SINR)
of users are also exceedingly complicated. 
As a result, the optimal 3D deployment of UAVs considering
the above factors is a significantly challenging problem. 

%

Owing to the importance of the UAV deployment problem, 
there has been an increasing number of studies that address
the aforementioned challenges~\cite{Moza16, Kala16, Moza19}. 
However, there are few studies that consider the 
spatial and temporal variations of ground users. 
For example, the works~\cite{Hour14, Moza16}
proposed optimal UAV deployment methods that maximize
the coverage area. 
However, in general, the density of users is 
spatially varied for several reasons, such as 
the higher densities at 
stations and sports events. 
Therefore, to maximize the user coverage, 
it is insufficient to maximize the coverage area, 
and the spatial inhomogeneous  density of users must be considered.

Furthermore, the density of ground users varies temporally 
because they may leave and join the network
at any time and move dynamically at each moment. 
Although several studies~\cite{Kala16, Bor16} proposed 
optimal UAV deployment methods
assuming that  the specific positions of all the users are known, 
these positions may change temporally, and 
tracking all the user movements is unrealistic. 
In addition, these methods were fundamentally 
centralized approaches, which are unsuitable 
for a more dynamic network. This is because 
the computational expense is likely to escalate 
with an increase in the number of users or in the total area. 
Therefore, it is crucial to develop 
an efficient 3D deployment method for UAV-BSs 
that can respond to the spatial and temporal dynamics
of users.

In this paper, we propose a novel distributed 3D deployment
method for UAV-BSs in a downlink network. 
In our method, each UAV dynamically updates its 3D position 
by collaborating with neighboring UAVs
so that the overall coverage of users is maximized in an on-demand manner. 
The distributed nature and incremental updates
of our method enable it to be applied to 
spatially and temporally varying networks.

Our proposed method consists mainly of two key parts
that address the above challenges in spatially and temporally varying 
ground users:
i) {\it sensing-aided crowd density estimation}; and 
ii) {\it distributed push-sum algorithm}. 
The first part is used to estimate the density of 
users based on the information from {\it ground sensors}
deployed in the network. 
As it is computationally intensive 
to obtain all the positions of ground users at each moment in time, 
we assume that ground users are distributed by 
a spatial point process. 
%
However, its intensity (density) function is also generally unavailable. 
For this problem, we assume that the ground sensors can detect users 
nearby them, and estimate the number of users, e.g., by
a video surveillance system with human detection/tracking
methods~\cite{Sale15} and Wi-Fi access points with 
received signal strength indicator~(RSSI)- or channel state information
(CSI)-based sensing methods~\cite{Scha14, Xi14}.
Since the intensity of users is typically considered to be
spatially correlated (e.g., road system), 
we infer the entire intensity function from periodically gathered
sensing results. 
%

In the second step, the UAVs dynamically optimize
their positions in a distributed manner. 
%
Fundamentally, the challenge posed by {\it distributed} 3D UAV deployment
is additional to the present problems
such as the A2G channel characteristics. 
This is because the SINR of a user 
served by a UAV is affected by interference from other UAVs. 
Thus, to control the SINR of the user, 
each UAV needs to consider the positions of all the other UAVs
and update its position so that the SINR of all users 
are improved, and not only its serving users is imporved. 
To address this problem, 
we apply a distributed push-sum algorithm framework~\cite{Kemp03, Tata17}. 
With this framework, UAVs incrementally optimize their 3D positions
by collaborating with their neighbor UAVs. We also prove 
that the algorithm converges to a consensus among the UAVs. 
To the best of our knowledge, this is the first study to develop
a distributed 3D UAV deployment method with guaranteed convergence.  
%
%
%
%
%
Regarding the performance metric of a UAV deployment, 
we adopt the {\it total coverage} of users, i.e., 
the total expected number of users whose SINR exceeds a certain threshold. 
By expressing the coverage probability 
of a user theoretically and analyzing the total coverage, 
we apply this performance metric to
the distributed push-sum algorithm framework. 
Furthermore, we evaluate our method with extensive simulations. 
The results reveal that our method can successfully 
improve the overall coverage with a limited number of ground sensors. 
We also apply our method to a dynamic network scenario
and demonstrate that it can respond to moving hotspots 
and provide coverage to users in an on-demand manner.

The remaining part of this paper is organized as follows. 
Section~\ref{sec-related} summarizes related studies. 
In Section~\ref{sec-model}, we describe our network model in detail. 
Section~\ref{sec-main} presents the proposed method. 
In Section~\ref{sec-simu}, we provide several simulation results. 
Finally, Section~\ref{sec-conc} concludes our paper.

\section{Related Work}\label{sec-related}
%
Since UAV-BSs can be potentially used for
demanding future networks, 
several deployment methods have been proposed.
In particular, several studies considered 
an optimal 3D-deployment of a single UAV. 
Al-Hourani et al.~\cite{Hour14} presented
the optimal altitude of a UAV that maximizes the coverage area. 
They also developed a closed-form expression of 
the LoS probability of a UAV based on a probabilistic
LoS model provided by the International Telecommunication Union 
(ITU-R)~\cite{ITUR03}. 
Bor-Yaliniz et al.~\cite{Bor16} proposed an efficient deployment
method that maximizes the number of covered users. 
%
Alzenad et al.~\cite{Alze18} further 
considered a scenario where users have 
heterogeneous quality of service (QoS) requirements. 
%

In contrast to the single-UAV cases, multiple-UAV
scenarios lead to an inter-cell interference problem. 
Mozaffari et al.~\cite{Moza15} considered the interference
among two UAVs and proposed an optimal placement method 
that maximizes the covered area. Moreover, the authors
extended this method to a multiple-UAV scenario in \cite{Moza16}. 
%
Kalantari et al.~\cite{Kala16} proposed a meta-heuristic-based 
optimal 3D placement algorithm that maximizes the user coverage
using the minimum number of UAVs. 
Furthermore, the work~\cite{Ghan18} proposed a 
Q-learning-based 3D deployment method for maximizing 
the data rate of ground users. 
%
%
However, the above methods are fundamentally centralized and offline approaches. 
Since the 3D placement optimization problem of UAVs has a high degree of freedom and 
the computational cost increases with the number of users 
or deployment area, they are not applicable 
to dynamic scenarios in which the density of users changes temporally. 

Despite the importance of the dynamic scenarios, 
dynamic and distributed UAV-BSs deployment
problems have not been  extensively studied. 
Several recently proposed dynamic deployment methods focused on 
constructing a desired formation for relay networks~\cite{Orfa16}
or target tracking~\cite{Dutt18, Zhao18b}. 
%
Thus, they are not for UAV-BS networks, i.e., 
did not consider ground users or backbone networks. 
Zhao et al.~\cite{Zhao18} proposed 
a heuristic-based distributed motion-control method for UAV-BSs. 
%
However, they focused on the connectivity among UAVs
and only UAVs' horizontal movements. 
They did not consider the probabilistic characteristics of the SINR of users 
including A2G channel and interference. 
In contrast, 
we theoretically formulate the distributed UAV 3D deployment problem 
by considering interference among UAVs and the coverage probability 
of users. 
We also demonstrate that by controlling the UAV altitudes, 
the total coverage  can be efficiently improved. 

%

%

\section{Model Description}\label{sec-model}
\subsection{Network modeling}\label{subsec-network}
In this paper, we consider a downlink network 
in which UAVs act as flying BSs and aim to enhance
the communication quality of users. 
We first provide an overview of the network model. 
As shown in Fig.~\ref{fig:system}, the network consists mainly of 
four types of elements: UAVs, on-ground user equipments
(UEs), on-ground sensors (GSs), and a remote gateway (RG). 
UAVs provide internet connections to the UEs by connecting to the RG, which also 
provides UAVs connections to back-bone 
networks (e.g., satellites and larger UAVs~\cite{Zhao18}). 
%
GSs are assumed to be capable of detecting UEs nearby the GSs
and estimating their (average) number (e.g., Wi-Fi access points with 
RSSI- or CSI-based sensing methods~\cite{Scha14, Xi14}). 
The sensing results observed at GSs 
are periodically reported to a central server 
through a wired or wireless link. 
Then, the central server estimates the entire intensity of the UEs
and distributes it among the UAVs through the RG. 

\begin{figure}[!t]
\centering
\includegraphics[width=3.4in]{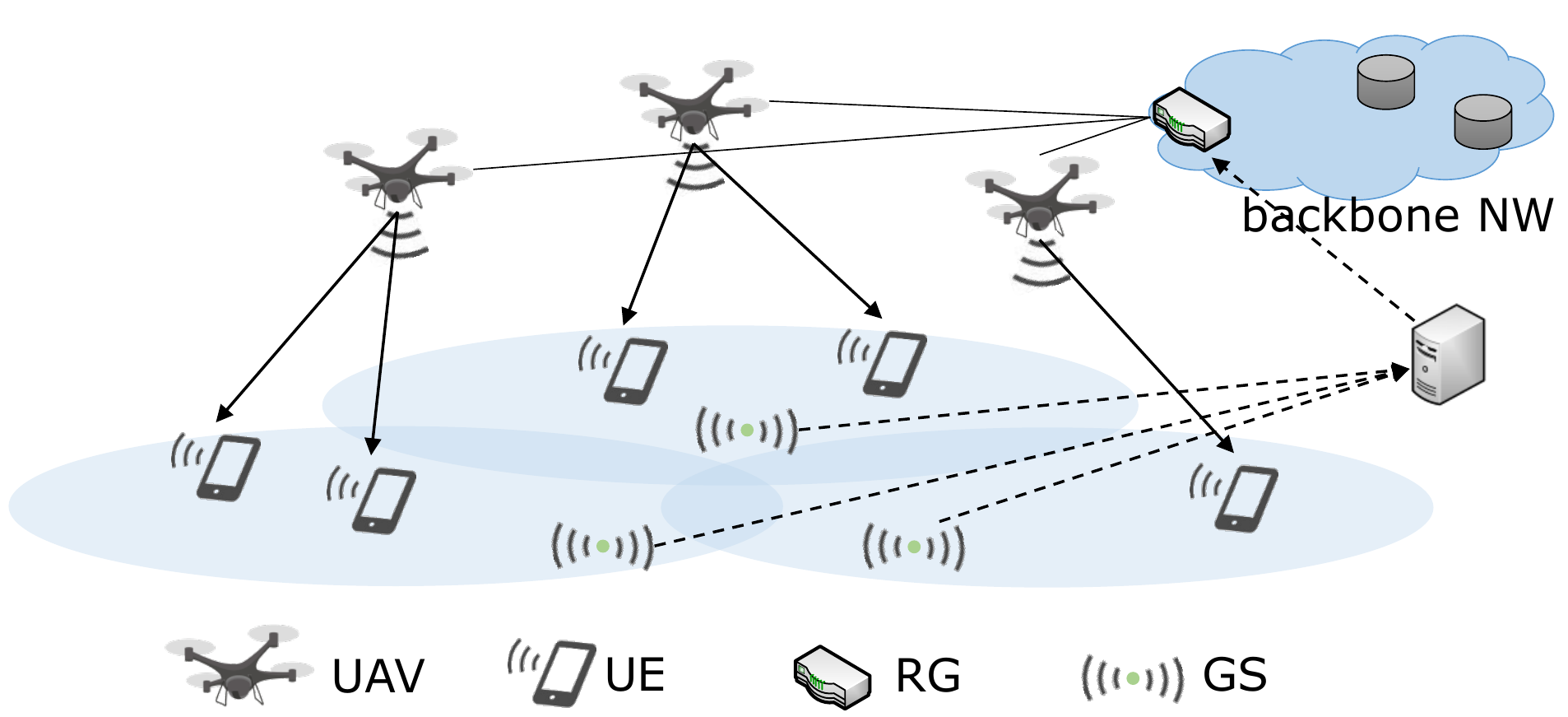}
\caption{Illustration of UAV network model. }
\label{fig:system}
\vspace{-3.0mm}
\end{figure}

We now describe our model in detail. 
Table~\ref{tab:notation} summarizes the notations used in this paper. 
We assume that time is slotted as $t \in \{0,1,\dots\} \triangleq \bbZ_+$. 
Each time slot corresponds to the time at which the UAV positions are updated. 
%
We assume that the UAVs can move anywhere in a certain 
closed convex set $\Omega_3 \subset \bbR^2 \times \bbR_+$ 
where $ \bbR_+ = [0, \infty)$ 
and that the UEs are in a closed convex set $\Omega_2 \subseteq \calP(\Omega_3) $ on the ground. 
Here,  $\calP:\bbR^3 \to \bbR^2$ represents the projection from the 3D space to the plane (ground). 
Let $\vc{u}_i(t) = [u_{i0}(t), u_{i1}(t), u_{i2}(t)]^{\top} \in \Omega_3$ 
($i \in \{1,\dots,U\} \triangleq \calU$) denote the position
of the $i$-th UAV at time $t \in \bbZ_+$. 
Here, $U$ denotes the total number of UAVs. We also write 
$\vc{u}(t) = [\vc{u}^{\top}_1(t), \dots, \vc{u}^{\top}_U(t)]^{\top} \in 
(\Omega_3)^{U}$. 
Furthermore, we assume that the UEs are distributed according to 
a certain (inhomogeneous) spatial point process with  an intensity function
$\lambda(\vc{y})~(\vc{y} \in \Omega_2)$. 
Although $\lambda(\cdot)$ depends on $t$ due to the dynamics of UEs,
we omit the index for simplicity. 

Next, we  describe the activities of GSs. 
Let $\vc{g}_n \in \Omega_2~(n=1,\dots,N)$ denote the positions 
of GSs. Here, $N$ represents the total number of the GSs.
The GSs are assumed to be capable of  detecting and estimating the number of UEs around them by sensing and
report their sensing results periodically  at time $T_k \triangleq k T~(k \in \bbZ_+)$. 
Here, $T$ denotes the sensing period and is assumed to be 
larger than UAVs' update interval (i.e., $T \gg1$)
due to the computational time of sensing or crowd density estimation. 
Furthermore, let $x_n(T_k) \in \bbR_+~(n=1,\dots,N)$ denote the sensing result reported from the $n$-th GS at time $T_k$, 
i.e., the (average) expected number of UEs 
per unit space in the neighborhood of $\vc{g}_n$. 
Thus, it can be regarded as an observation of $\lambda(\vc{g}_n)$ at time $T_k$. 
We also express $\vc{x}(T_k) = [x_1(T_k), \dots, x_N(T_k)]^{\top} \in (\bbR_+)^N$. 
Using $\vc{x}(T_k)$, the crowd density estimation part infers
the entire intensity function 
as $\widetilde{\lambda}_k(\vc{y})$ ($\vc{y} \in \Omega_2$). 
Note that we can easily consider the scenario where UAVs are equipped with sensors
by taking into account sensing results at $\vc{u}_i(t)$'s. 
However, in this study, we assume that only GSs can perform sensing, for simplicity. 

\begin{table}[ht]
\caption{List of Notations}
\label{tab:notation}
\centering
\begin{tabular}{ll} \hline
$\vc{u}_i(t)$, $\vc{u}(t)$ & positions of $i$-th UAV and all UAVs at time $t$ \\
$U$                        & total number of UAVs \\
$\vc{g}_n$                 & position of $n$-th GS \\
$N$                        & total number of GSs \\
$\lambda(\vc{y})$, $\widetilde{\lambda}(\vc{y}) $& 
                            true and estimated intensity of UEs at $\vc{y}$\\
$T$, $T_k$                 & sensing period and $k$-th sensing time\\
$x_n(T_k)$, $\vc{x}(T_k)$  & sensing result  at time $T_k$ from $n$-th or all GSs\\
$\theta_{i,y}(t)$          & elevation angle from $\vc{y}$ to $i$-th UAV at time $t$\\
$d_{i,y}$(t)               & distance between $i$-th UAV and UE at $\vc{y}$ at time $t$\\
$\calV_i(\vc{u})$          & UAV cell of $i$-th UAV\\
$\calN_i(t)$               & neighborhood of $i$-th UAV\\
$h_{i,y}(t)$               & fading gain corresponding to  $i$-th UAV and UE at $\vc{y}$ \\
$I_{i,y}(t)$               & interference at UE at $\vc{y}$ served by $i$-th UAV \\
$\sigma$                   & normalized thermal noise power\\
$\Theta$                   & SINR threshold  \\
 \hline
\end{tabular}
\end{table}

\subsection{Channel modeling and UAV-cell definition}

We consider the following radio propagation model for
UAV--UE channels. Due to the blockage effect in the A2G link, 
the link conditions can be 
divided into LoS and non-LoS (NLoS) conditions. 
We adopt a probabilistic LoS model similar to that in \cite{Hour14, Wu18}, 
wherein a link condition ($\rmL: \LoS$, $\rmN: \NLoS$)
 is determined by the following probability: 
%
%
\begin{align}
\PP(\rmL ; \theta_{i, y}(t))
&= {1 \over 1 + b_1 \exp(- b_0({180 \over \pi}\theta_{i,y}(t) - b_1))},
\label{eq-LoS}
\\
\PP(\rmN ; \theta_{i, y}(t))
&=
1- \PP(\rmL ;\theta_{i, y}(t)),
\label{eq-NLoS}
\end{align}
where  $b_0$ and $b_1$ are constants dependent on 
the environments (e.g., urban and rural) and are provided in \cite{Hour14}, 
and $\theta_{i,y}(t)\in [0, \pi/2]$ is the elevation angle 
from a UE at $\vc{y} \in \Omega_2$ to the $i$-th UAV. Thus, if we denote
$d_{i,y}(t)$ as the distance between the $i$-th UAV and a UE at $\vc{y}$, 
$\theta_{i,y}(t) = \sin^{-1}(u_{i2}(t) /d_{i,y}(t))$. 
For simplicity, we assume that all the  UAVs and UEs are equipped with
omni-directional antennas. 
%
%
Although the channel condition is spatially and temporally 
correlated in general, for simplicity, we assume that they are independently 
determined at each time and position. 
%
We assume a distance-dependent path-loss model~\cite{Wu18, Zhu18} such that,
for a distance $d$, 
\[
\ell_q(d)  = \beta_q (\varepsilon_0 + d)^{-\alpha_q}
\quad
\mbox{for $q \in \{\rmL, \rmN\}$},
\]
where $\alpha_q > 2 $ is a path-loss exponent, 
$\varepsilon_0 > 0$ denotes a constant to avoid
a singularity at $d = 0$,  
and $\beta_q> 0$ is a constant that depends 
on the environment. 
%
%
Furthermore, we model the small-scale fading effect
in UAV-UE channels with the independent Nakagami-$m$
fading similar to that in \cite{Zhu18}. 
Let $h_{i,y}(t)$ denote the small-scale 
fading gain corresponding 
to the channel between the $i$-th UAV and a UE at $\vc{y}$. 
Then, $h_{i,y}(t)$ is distributed according to 
the normalized gamma distribution with a parameter $m_{q}$
$(q \in \{\rmL, \rmN\})$. 
For simplicity, we assume  $m_q$ to be an integer 
throughout this paper. 
In addition, we omit the effects of shadowing
and Doppler shift for mathematical tractability. 
The transmission power of each UAV is identical and 
is assumed to be normalized to one. 

We next define the serving area of each UAV, i.e., a UAV-cell. 
In this paper, we assume that each UAV has a 
capacity sufficient  for serving UEs.
Moreover, each UE is assumed to associate 
with the UAV that provides the strongest average signal power. 
By definition, the average signal power from 
the $i$-th UAV to a UE at $\vc{y}$ is expressed as
\begin{equation}
S_{i,y}(t) = 
\sum_{q \in \{\rmL, \rmN\}} \PP(q; \theta_{i,y}(t))
\ell_q(d_{i,y}(t)).
\label{eq-S_{i,y}}
\end{equation}
The UAV-cell is then defined as
a {\it signal-weighted Voronoi cell}, i.e.,
for the $i$-th UAV, 
\begin{equation}
\calV_i(\vc{u}(t))
=
\left\{
\vc{y} \in \Omega_2 \mid 
S_{i,y}(t)\ge S_{j,y}(t),
\forall j \in \calU\backslash\{i\}
\right\}.
\label{eq-def-calC_i}
\end{equation}
From this expression, we can construct an associated 
{\it signal-weighted Delaunay graph} $\calG(t)$
by choosing the set of vertices as $\calU$
and the set of edges 
as pairs of UAVs whose signal-weighted Voronoi cells are adjacent. 
Let $\calN_i(t)$ denote the set of neighboring 
UAVs of the $i$-th UAV on the graph $\calG(t)$. 
We assume that the $i$-th UAV can communicate with only
$\calN_i(t)$ in our distributed collaborative deployment method.

\subsection{Problem description}

%
In this paper, we consider the {\it total coverage} of UEs,
i.e., the total expected number of covered UEs,
as the performance metric of UAV deployment.
We assume that a UE is {\it covered}
if the SINR at the UE exceeds a threshold $\Theta$. 
%
%
By considering a random channel condition and fading, 
the {\it coverage probability} of a UE at $\vc{y}$ served by the $i$-th UAV is
expressed as
\begin{equation}
C_{i,y}(\vc{u}(t)): = C_{i,y}(\vc{u}(t); \Theta)  = \PP(\SINR_{i,y}(t) > \Theta).
\label{eq-def-A}
\end{equation}
Here, $\SINR_{i,y}(t)$ is expressed as 
\begin{equation}
\SINR_{i,y}(t) = {
h_{i,y}(t) \ell_q(d_{i,y}(t))
\over 
I_{i,y}(t) + \sigma
},
\label{eq-SINR}
\end{equation}
where $I_{i,y}(t)$ is the total interference power
defined as
%
\begin{equation}
I_{i,y}(t) = \sum_{j \in \calU\backslash\{ i\}}
 h_{j,y}(t) \ell_q(d_{i,y}(t)),
\label{eq-I_{ij}}
\end{equation}
and $\sigma$ denotes the thermal noise power and 
is assumed to be constant. 
%
%
By considering the coverage probability, 
we can characterize the communication quality of UEs more flexibly
than the existing performance metrics based on 
mean path-loss \cite{Bor16, Alze18} or average SINR~\cite{Kala16, Zhao18}. 
By taking the expectation of the number
of covered UEs on $\Omega_2$ and applying Campbell's theorem~\cite{Stoy95}, 
the total coverage can be expressed as
%
\begin{align}
&\int_{\Omega_2}
\bbE[\onev(\mbox{UE at $\vc{y}$ is covered})]
\lambda(\vc{y}) \rmd \vc{y}
\nonumber
\\
&~~~~=
\sum_{i \in \calU}
\int_{\calV_i(\vc{u})}
C_{i,y}(\vc{u})
\lambda(\vc{y}) \rmd \vc{y},
\label{eq-true-F}
\end{align}
where $\onev(\cdot)$ denotes the indicator function. 

We are now prepared to state our optimization problem. 
Our objective is to determine an optimal 3D deployment of UAVs 
that maximizes (\ref{eq-true-F}) in a distributed manner. 
Since the true intensity $\lambda(\vc{y})$ cannot be obtained
in general, we use an estimated intensity for the deployment 
optimization instead. 


\begin{prob}\label{prob-opt}
Assume that i) each UAV $i$ $(i \in \calU)$  can 
communicate only with its neighborhood $\calN_i(t)$ on $\calG(t)$
and ii) the estimated intensity at $T_k$ ($k \in \bbZ_+$) 
based on sensing result $\vc{x}(T_k)$
is given by $\widetilde{\lambda}_k(\vc{y})$. 
For each sensing period $k$, determine the optimal UAV deployment $\vc{u}\in (\Omega_3)^U$ 
that maximizes the estimated total coverage $F(\vc{u} \mid \vc{x}(T_k))$ defined as
%
%
\begin{align}
F(\vc{u} \mid \vc{x}(T_k))
&=
\sum_{i\in \calU}
\int_{\calV_i(\vc{u})}
C_{i,y}(\vc{u})
\widetilde{\lambda}_k(\vc{y})
\rmd \vc{y}.
\label{eq-J(u)-true}
\end{align}
\end{prob}

%
%

%
%
%

\section{Distributed UAV 3D Deployment Method}\label{sec-main}

In this section, we present 
the distributed collaborative UAV 3D deployment method. 
Our method mainly consists
of two parts: i) sensing-aided crowd density estimation; and
ii) distributed push-sum algorithm. 
At each sensing time $T_k$, GSs send $\vc{x}(T_k)$ to a central server. 
The crowd estimation part then estimates 
$\lambda(\vc{y})$. 
During the $k$-th sensing period, UAVs update their 3D positions
by using $\widetilde{\lambda}_k(\vc{y})$ 
via the distributed push-sum algorithm, which is aimed 
at solving
Problem~\ref{prob-opt}. 

This section is organized as follows. 
We first analyze our performance metrics, namely
the coverage probability and total coverage 
in Section~\ref{subsec-avai}. 
We then describe the crowd density estimation 
part in Section~\ref{subsec-crowd} and
present the distributed push-sum algorithm 
part in Section~\ref{subsec-pushsum}. 
Finally, we prove the convergence of the distributed algorithm
in Section~\ref{subsec-conv}.

\subsection{Coverage analysis}\label{subsec-avai}
%
Since our distributed optimization method is gradient-based, 
we analyze the total coverage and derive its gradient in this section. 
We omit the index $t$ in this section because  a fixed time is being considered. 
%
We start by deriving 
the theoretical expression of the coverage probability of  UEs. 
By using the expressions of (\ref{eq-def-A}) and (\ref{eq-SINR}), 
we can obtain the following result.  A sketch of the proof
is provided in  Appendix~A. 
%
\begin{lem}\label{lem-A}
If the deployment of UAVs is $\vc{u} \in (\Omega_3)^U$, 
the coverage probability $C_{i,y}(\vc{u})$ 
can be approximated by
%
\begin{align}
C_{i,y}(\vc{u}) 
\approx
\widetilde{C}_{i,y}(\vc{u}) 
&\triangleq
\sum_{q_0 \in \{\rmL, \rmN\}}
\PP(q_0; \theta_{i,y})
\sum_{k=1}^{m_{q_0}}(-1)^{k+1}
\nonumber
\\
&\times
{m_q \choose k}
e^{- k \sigma \gamma_{q_0, i,y}}
\calL_{I_{i,y}}(k\gamma_{q_0, i,y}), 
\label{eq-A_{i,n}}
\end{align}
where 
%
$\gamma_{q, i,y}
= 
{\eta_q \Theta \over \ell_q(d_{i,y})}
$
with $\eta_{q} = m_q(m_q!)^{-{1\over m_q}}$
for $q \in \{\rmL, \rmN\}$,
and $\calL_{I_{i,y}}(s)$ is given by 
\begin{align}
\calL_{I_{i,y}}(s)=\!\!
\prod_{j \in \calU\backslash\{i\}}\!
\sum_{q_j \in \{\rmN,\rmL \}}\!\!\!
\PP(q_j; \theta_{j,y})
\left(
1 + {s  \ell_{q_j}(d_{j,y}) \over m_{q_j}}
\right)^{-m_{q_j}}\!\!\!\!\!\!\!\!.
\nonumber
\end{align}
\end{lem}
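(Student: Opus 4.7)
The plan is to derive the claimed expression by conditioning on the LoS/NLoS state of the serving link and then exploiting the fact that the small-scale fading gains are normalized gamma variates with integer shape parameters. Concretely, I would write
\begin{align}
C_{i,y}(\vc{u}) = \sum_{q_0 \in \{\rmL,\rmN\}} \PP(q_0;\theta_{i,y})\, \PP\!\left(h_{i,y}\,\ell_{q_0}(d_{i,y}) > \Theta (I_{i,y}+\sigma)\,\big|\, q_0\right),
\end{align}
which already accounts for the outer sum and the factor $\PP(q_0;\theta_{i,y})$ appearing in the claim.

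Next I would turn the conditional tail probability $\PP(h_{i,y} > \Theta(I_{i,y}+\sigma)/\ell_{q_0}(d_{i,y}) \mid q_0)$ into a manageable form. Because $h_{i,y}$ conditioned on $q_0$ is normalized Gamma with integer shape $m_{q_0}$, I would invoke the standard Alzer-type approximation for its CDF, namely $\PP(h \le x) \approx (1 - e^{-\eta_{q_0} x})^{m_{q_0}}$ with $\eta_{q_0} = m_{q_0}(m_{q_0}!)^{-1/m_{q_0}}$, and expand the $m_{q_0}$-th power by the binomial theorem to obtain
\begin{align}
\PP(h_{i,y} > x \mid q_0) \approx \sum_{k=1}^{m_{q_0}} (-1)^{k+1} \binom{m_{q_0}}{k} e^{-k\eta_{q_0} x}.
\end{align}
Substituting $x = \Theta(I_{i,y}+\sigma)/\ell_{q_0}(d_{i,y})$ produces the factor $e^{-k\sigma \gamma_{q_0,i,y}}$ together with a residual term $e^{-k\gamma_{q_0,i,y} I_{i,y}}$, after which I would take expectation over the randomness of $I_{i,y}$. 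The latter expectation is by definition the Laplace transform $\calL_{I_{i,y}}(k\gamma_{q_0,i,y})$, giving the exact summand in the claim.

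It then remains to compute $\calL_{I_{i,y}}(s)$. Using the expression for $I_{i,y}$ as a sum over $j \neq i$ of independent contributions and pulling the expectation inside the product, I would condition each term on the link state $q_j$, weight by $\PP(q_j;\theta_{j,y})$, and then use the closed-form MGF of the normalized Gamma, $\bbE[e^{-s h}] = (1 + s/m)^{-m}$, applied with argument $s\,\ell_{q_j}(d_{j,y})$. This yields exactly the product-of-sums formula stated for $\calL_{I_{i,y}}(s)$.

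The only nontrivial step is the second one: the equality sign in the lemma is weakened to $\approx$ precisely because Alzer's bound on the Gamma CDF is not tight. I would therefore state this approximation as the sole source of the $\approx$, note that it becomes an equality for $m_q=1$ (Rayleigh/exponential case) and is tight for moderate $m_q$, and refer to Appendix A (mentioned in the paper) for the standard argument. Every other manipulation---the conditioning on link states, the independence of the $h_{j,y}$'s, and the Laplace-transform factorization---is exact, so the overall accuracy of \eqref{eq-A_{i,n}} is governed entirely by Alzer's approximation.
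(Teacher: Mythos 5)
Your proposal is correct and follows essentially the same route as the paper's Appendix~A: conditioning on the serving-link state, applying the Alzer-type tail approximation for the normalized gamma CDF with $\eta_{q}=m_{q}(m_{q}!)^{-1/m_{q}}$, binomially expanding, and identifying the resulting expectation as the Laplace transform of $I_{i,y}$, which factorizes over independent interferers via the gamma MGF. Your explicit remark that the sole source of the $\approx$ is the Alzer approximation (exact for $m_q=1$) is a correct and slightly sharper accounting than the paper's brief citation of the same method.
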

%
%
%
On the basis of Lemma~\ref{lem-A}, 
we aim to solve Problem~\ref{prob-opt}
through $\widetilde{C}_{i,y}(\vc{u})$
in what follows. 
In other words, we replace $C_{i,y}(\vc{u})$ in the 
objective function in (\ref{eq-J(u)-true})
by $\widetilde{C}_{i,y}(\vc{u})$ and consider 
the optimization problem of the following function:
%
\begin{align}
\widetilde{F}(\vc{u} \mid \vc{x}(T_k))
&=
\sum_{i\in \calU}
\int_{\calV_i(\vc{u})}
\widetilde{C}_{i,y}(\vc{u})
\widetilde{\lambda}_k(\vc{y})
\rmd \vc{y}.
\label{eq-J(u)}
\end{align}
%

Furthermore, from (\ref{eq-A_{i,n}}), we can calculate
the derivatives of $\widetilde{C}_{i,y}(\vc{u})$
with respect to $\vc{u}_j$, i.e., 
\[
\widetilde{\vc{c}}^j_{i,y}(\vc{u}) \triangleq 
{\rmd \over \rmd \vc{u}_j}\widetilde{C}_{i,y}(\vc{u}),
\qquad
i,j \in \calU.
\vspace{-1mm}
\]
Since we can conveniently obtain $\widetilde{\vc{c}}^j_{i,y}(\vc{u})$
from Lemma~\ref{lem-A}, we omit their explicit 
expressions due to space limitations.

Next, we demonstrate that the derivative of $\widetilde{F}(\vc{u}\mid\vc{x})$ with respect
to $\vc{u}_j$ can be approximated through $\widetilde{\vc{c}}_{i,y}^j(\vc{u})$. 
A sketch of the proof  is provided in Appendix~B.
In what follows, we simply write $\widetilde{F}(\vc{u}):= \widetilde{F}(\vc{u}\mid \vc{x})$ for readability.  

\begin{lem}\label{lem-div-F}
If the estimated intensity function  
is $\widetilde{\lambda}(\vc{y})$ ($\vc{y} \in \Omega_2$), 
%
\begin{equation}
{\rmd  \widetilde{F}(\vc{u} )\over \rmd \vc{u}_j}
\approx
\sum_{i \in \calU}
\int_{\calV_i(\vc{u})} 
\widetilde{\vc{c}}_{i,y}^j(\vc{u})
\widetilde{\lambda}(\vc{y})
\rmd \vc{y},
\quad
\mbox{$j \in \calU$}.
\label{eq-div-F-u_j}
\end{equation}
%
\end{lem}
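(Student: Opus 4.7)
The plan is to differentiate the per-cell sum $\widetilde{F}(\vc{u}) = \sum_{i\in\calU}\int_{\calV_i(\vc{u})} \widetilde{C}_{i,y}(\vc{u})\widetilde{\lambda}(\vc{y})\,\rmd\vc{y}$ with respect to $\vc{u}_j$ by applying the Reynolds transport theorem to each summand, since both the integrand $\widetilde{C}_{i,y}(\vc{u})$ and the signal-weighted Voronoi cell $\calV_i(\vc{u})$ depend on $\vc{u}_j$ through (\ref{eq-S_{i,y}})--(\ref{eq-def-calC_i}). Accordingly, the derivative of each summand splits into an \emph{interior} contribution, in which the domain is frozen and only the integrand is differentiated, and a \emph{boundary} contribution, in which the infinitesimal normal velocity of $\partial\calV_i(\vc{u})$ under a perturbation of $\vc{u}_j$ is integrated against $\widetilde{C}_{i,y}(\vc{u})\widetilde{\lambda}(\vc{y})$.

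First I would dispose of the interior terms. Because $\widetilde{\lambda}$ does not depend on $\vc{u}$ and $(\rmd/\rmd\vc{u}_j)\widetilde{C}_{i,y}(\vc{u})=\widetilde{\vc{c}}_{i,y}^j(\vc{u})$ by definition, summing over $i\in\calU$ reproduces precisely the right-hand side of (\ref{eq-div-F-u_j}). Next I would examine the boundary contributions: only the portions of $\partial\calV_i(\vc{u})$ shared with $\partial\calV_j(\vc{u})$ can be deformed when $\vc{u}_j$ is perturbed, so, after pairing the $i$-th and $j$-th summands, the boundary terms consolidate into surface integrals over each common interface weighted by the jump $\bigl(\widetilde{C}_{i,y}(\vc{u})-\widetilde{C}_{j,y}(\vc{u})\bigr)\widetilde{\lambda}(\vc{y})$, with a geometric factor encoding the interface's normal velocity.

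Finally, I would argue that this jump is negligible on each interface, which is what converts the equality into the approximation stated in (\ref{eq-div-F-u_j}). By (\ref{eq-def-calC_i}), on $\partial\calV_i(\vc{u})\cap\partial\calV_j(\vc{u})$ one has $S_{i,y}=S_{j,y}$, so both the serving-link mean power and the total aggregate mean power $\sum_{k\in\calU}S_{k,y}$ are invariant when the roles of $i$ and $j$ as serving UAV are swapped. Since the expression for $\widetilde{C}_{i,y}(\vc{u})$ in Lemma~\ref{lem-A} depends on the deployment only through the serving-link gain and the collection of interferer gains with common Nakagami-$m$ statistics, $\widetilde{C}_{i,y}(\vc{u})$ and $\widetilde{C}_{j,y}(\vc{u})$ coincide on the shared boundary to leading order, and the boundary contribution is dropped. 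The main obstacle is exactly this last step: the Laplace-transform factors in $\calL_{I_{i,y}}(s)$ treat the serving and interfering indices asymmetrically (through $\gamma_{q_0,i,y}$ versus the interferer product), so the cancellation is only approximate, and promoting it to a rigorous residual estimate is the delicate point that motivates presenting (\ref{eq-div-F-u_j}) as an approximation rather than an equality.
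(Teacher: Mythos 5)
Your proposal is correct and takes essentially the same route as the paper's Appendix~B: the paper likewise differentiates each cell integral via a generalized Leibniz/transport rule (Proposition~A.1 of Cort\'{e}s et al.~\cite{Cort05}), keeps the interior terms to form the right-hand side of (\ref{eq-div-F-u_j}), and drops the paired boundary terms on the shared interfaces by invoking $\SINR_{i,y}\approx\SINR_{j,y}$ on $\partial\calV_i\cap\partial\calV_j$, which is precisely your observation that $S_{i,y}=S_{j,y}$ there makes the jump in $\widetilde{C}_{i,y}$ negligible. The only ingredient you pass over is the regularity check (star-shapedness of the cells and boundedness/continuous differentiability of the integrands) that the paper supplies to justify the differentiation under the integral, which does not alter the substance of the argument.
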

%

According to Lemma~\ref{lem-div-F}, 
the gradient of $\widetilde{F}(\vc{u})$ 
exhibits the following useful (approximate) separation property, 
which enables us to optimize $\widetilde{F}(\vc{u})$
in a distributed manner: 
\begin{align}
\nabla \widetilde{F}(\vc{u} ) &\approx \sum_{i\in\calU} \widehat{\vc{f}}_i(\vc{u}),
\label{eq-nabla-F=sum-tilde-f_i}
\end{align}
where, for $i \in \calU$, 
\begin{align}
\widehat{\vc{f}}_i(\vc{u} ) 
&\triangleq
\left[
\int_{\calV_i(\vc{u})} 
\widetilde{\vc{c}}_{i,y}^j(\vc{u})
\widetilde{\lambda}(\vc{y})
\rmd \vc{y}
\right]^\top_{j \in \calU}
\nonumber
\\
&=
\int_{\calV_i(\vc{u})} 
\nabla \widetilde{C}_{i,y}(\vc{u})
\widetilde{\lambda}(\vc{y})
\rmd \vc{y}. 
\label{eq-def-tilde-f_i}
\end{align}
The above relationship 
indicates that there exists $\widehat{F}(\vc{u})$
such that $\nabla \widehat{F}(\vc{u}) = 
\sum_{i\in\calU} \widehat{\vc{f}}_i(\vc{u}) $ and
$\widetilde{F}(\vc{u}) \approx \widehat{F}(\vc{u})$. 
%

%

%

\subsection{Crowd density estimation}\label{subsec-crowd}
We next describe the crowd density estimation. 
The aim of this part is to estimate the 
intensity function 
from the sensing results at sensing time $T_k$ ($k \in \bbZ_+$),
i.e., to obtain $\widetilde{\lambda}_k(\vc{y})$ from $\vc{x}(T_k)$. 
%
Typically, the  density of UEs is spatially correlated
owing to geographical factors, such as roads and attractions. 
Thus, by modeling this spatial correlation,  
we infer the entire intensity function  with a limited number of GSs.

For a fixed sensing time, 
we consider a spatial Gaussian process (GP) prior
$\GP(\mu, k_0(\vc{y}, \vc{y}'))$ 
$(\vc{y}, \vc{y}' \in \bbR^2)$ 
for  $\lambda(\vc{y})$. 
Here, the constant $\mu$ expresses the mean 
of $\GP$ and corresponds to a prior expectation of
$\lambda(\vc{y})$. 
Meanwhile, $k_0(\vc{y}, \vc{y}')$ 
is the kernel function of $\GP$ and
represents the spatial correlation of $\lambda(\vc{y})$. 
In this paper, we adopt a well-accepted Gaussian kernel: 
%
\begin{equation}
k_0(\vc{y}, \vc{y}') = A_0
 \exp\left(- {\|\vc{y} - \vc{y}'\|^2 \over A_1}\right),
\qquad \vc{y}, \vc{y}'\in \bbR^2, 
\label{eq-G-kernel}
\end{equation}
where $\|\cdot\|$ denotes the Euclidean norm
and $A_0$ and $A_1$ are constants. 
We assume that the parameters of GP (i.e., $\mu$, 
$A_0$, and $A_1$) are obtained from statistical data. 
GPs are widely used in modeling various spatial
data~\cite{Rasm06}
ranging from geology to environmental sciences. 
According to an established  property of a GP, 
if we regard observations $\vc{x}(T_k)$ 
as realizations of $\lambda(\vc{g}_n)$ ($n=1,\dots,N$) at $T_k$, 
they are 
distributed with a multivariate Gaussian
distribution. 
%
Furthermore, the posterior distribution of $\lambda(\vc{y})$
given $\vc{x}(T_k)$ also becomes a Gaussian (e.g., \cite{Rasm06}): 
as follows: 
\begin{align}
&\lambda(\vc{y}) \mid \vc{x}(T_k), \{\vc{g}_n; n=1,\dots,N\}
\nonumber
\\
&\quad\sim
\calN(\mu_y(\vc{x}(T_k)),
k_0(\vc{y}, \vc{y}) - \vc{k}_y^\top\vc{K}^{-1}_{\!g}\vc{k}_y),
\end{align}
where
\begin{align}
\mu_{y}(\vc{x}(T_k)) &= \mu + \vc{k}_y^\top \vc{K}_{\! g}^{-1} (\vc{x}(T_k) - \mu\vc{e}),
\label{eq-mu_y}
\\
\vc{k}_y &= [k_0(\vc{g}_1, \vc{y}), \dots, k_0(\vc{g}_N, \vc{y})]^{\top},
\nonumber
\\
[K_{\!g}]_{n,n'} &= k_0(\vc{g}_n, \vc{g}_n'), \qquad  n,n' =1,\dots,N,
\nonumber
\end{align}
and $\vc{e}$ denotes a vector of ones with 
an appropriate dimension. 
As a result, UAVs can estimate the entire intensity from 
the observations $\vc{x}(T_k)$ by setting
\begin{equation}
\widetilde{\lambda}_k(\vc{y}) = \mu_y(\vc{x}(T_k)), \qquad \vc{y} \in \Omega_2. 
\label{eq-lambda-mu}
\end{equation}
By substituting (\ref{eq-lambda-mu}) into (\ref{eq-J(u)}), 
$\widetilde{F}(\vc{u}\mid \vc{x}(T_k))$ in (\ref{eq-J(u)}) is now rewritten as
\begin{align}
\widetilde{F}(\vc{u} \mid \vc{x}(T_k))
&=
\sum_{i\in \calU}
\int_{\calV_i(\vc{u})} 
\widetilde{C}_{i,y}(\vc{u}) 
\mu_y(\vc{x}(T_k))
\rmd \vc{y}. 
\label{eq-J(u)-2}
\end{align}

\subsection{Distributed push-sum algorithm}\label{subsec-pushsum}
In this section, we describe the distributed push-sum algorithm. 
In this part, UAVs update their positions and
maximize  (\ref{eq-J(u)-2})
in a distributed manner. 
Our objective function (\ref{eq-J(u)-2}) depends on the positions of all the UAVs. 
Therefore, it is not trivial to solve the problem in a distributed
manner. However, the useful separation property of 
$\nabla \widetilde{F}(\vc{u})$ in 
(\ref{eq-nabla-F=sum-tilde-f_i})
enables us to apply a distributed push-sum
framework. This framework was first proposed by \cite{Kemp03}
for average computation. It was recently extended to 
a time varying network by \cite{Nedi15}, 
and to a non-convex optimization problem by Tatarenko and Touri~\cite{Tata17}. 

%

To state the  algorithm, we first introduce
an extension $\widehat{\vc{f}}^\ast_i$ 
of $\widehat{\vc{f}}_i$ that 
is defined on $(\bbR^3)^U$ and satisfies 
\begin{equation}
\widehat{\vc{f}}^\ast_i(\vc{u}) = 
\left\{
\begin{array}{ll}
\widehat{\vc{f}}_i(\vc{u}), & \vc{u} \in (\Omega_3)^U,\\
J \breve{\vc{u}}, & \vc{u} \in \overline{D},\\
\end{array}
\right.
\label{eq-f_i-extend}
\end{equation}
where $\overline{D}$ denotes the complement of 
an open set $D \subset (\bbR^3)^U$ disjoint with $(\Omega_3)^U$. 
In addition, $J> 0$ is a constant 
and $\breve{\vc{u}}$ denotes a unit vector directed from 
$\vc{u}$ toward its closest point on the boundary of $(\Omega_3)^U$. 
In the next section, we prove the existence of $\widehat{\vc{f}}^\ast_i$
that satisfies several conditions required for the convergence
of the distributed push-sum algorithm. 


\begin{algorithm}[t]                      
\caption{Distributed push-sum algorithm}         
\label{alg1}  
\begin{algorithmic}[1]
\STATE{Initialize: $\vc{u}^{[i]}(0)$, $\vc{w}^{[i]}(0)$, $\vc{\xi}^{[i]}(0)$ and $\phi^{[i]}(0)$; }
\FOR{$t = 0,1,\dots$}
\FOR{each UAV $i$} 
\STATE{Update $\calN_i$ and $\calV_i(\vc{u})$ of UAV $i$;}
\FOR{each neighboring UAV $j$} 
\STATE{
Send  $\vc{\xi}^{[i]}(t)$ and $\phi^{[i]}(t)$ to UAV $j$; and\\
Receive  $\vc{\xi}^{[j]}(t)$ and $\phi^{[j]}(t)$ from UAV $j$;
}
\ENDFOR
\STATE{
Update $\vc{u}^{[i]}(t)$, $\vc{w}^{[i]}(t)$, $\vc{\xi}^{[i]}(t)$ and $\phi^{[i]}(t)$ by (\ref{eq-update-u_i}); 
}
\STATE{
Update position of UAV $i$: $\vc{u}_i(t + 1) := \vc{u}_i^{[i]}(t+1)$;
}
\ENDFOR
\ENDFOR
\end{algorithmic}
%
\end{algorithm}

We now describe the algorithm in detail. 
The pseudo-code is given in Algorithm~\ref{alg1}. 
Each UAV $i$ (i.e., the $i$-th UAV) maintains a $3U$-dimensional vector
$\vc{u}^{[i]}(t) \in (\bbR^3)^U$, which represents the {\it pseudo} positions
of the UAVs. Here, {\it pseudo} indicates that 
this location information is {\it local} for the UAV $i$
and does not always represent the exact positions of the other UAVs
(only $\vc{u}^{[i]}_i(t)$ expresses the exact position of the UAV $i$).  
This is because each UAV can only communicate with its neighboring UAVs.
They also maintain auxiliary
$3U$-dimensional vectors $\vc{w}^{[i]}(t)$, 
and $\vc{\xi}^{[i]}(t)$, and a scholar variable $\phi^{[i]}(t)$. 
In each time step,  each UAV $i$  updates its local variables
according to the following rules:
\begin{align}
\vc{w}^{[i]}(t+1) &:= \!\!\! \sum_{j \in \calN_i(t)} {\vc{\xi}^{[j]}(t) \over d_j(t)},~
\phi^{[i]}(t+1) := \!\!\! \sum_{j \in \calN_i(t)} {\phi^{[j]}(t) \over d_j(t)},
\nonumber
\\
\vc{u}^{[i]}(t+1) &:= {\vc{w}^{[i]}(t+1) \over \phi^{[i]}(t+1)}, 
\label{eq-update-u_i}
\\
\vc{\xi}^{[i]}(t+1) &:= \vc{w}^{[i]}(t+1) 
\nonumber
\\
&~~~+ a(t+1)(\widehat{\vc{f}}^\ast_i(\vc{u}^{[i]}(t+1)) + \vc{\kappa}^{[i]}(t+1)),
\nonumber
\end{align}
where  $\vc{\kappa}^{[i]}(t) \in (\bbR^3)^U$ 
are i.i.d. random vectors whose entries are
independent random variables with zero mean and unit variance
for all $t \in \bbZ_+$. 
%
Moreover, $\{a(t)\}$ is a positive non-increasing step-size
sequence such that $a(t) = O( {1 \over t^\nu})$ $(\nu \in ({1\over2}, 1))$
and $d_i(t)$ is the node degree of the UAV $i$ on $\calG(t)$.
%

The update rules in (\ref{eq-update-u_i}) can be 
interpreted simply as follows. Each UAV aims to reach a consensus among the UAVs
by exchanging its local information with its neighbors 
and calculating their weighted sum. 
Furthermore, by combining with a gradient descent based on 
its local information (i.e., $\widehat{\vc{f}}^\ast_i(\vc{u})$),
the consensus point is 
steered towards a critical point of the objective function. 
In addition, the random perturbation $\vc{\kappa}^{[i]}(t)$ ensures 
the convergence to a local optimum~\cite{Tata17}.

\subsection{Proof of convergence}\label{subsec-conv}
In this section, we prove that the distributed push-sum
algorithm given in (\ref{eq-update-u_i}) converges to 
a local optimum of the objective function among all the UAVs
by employing the framework in \cite{Tata17}. 
To do this, we first prove 
the existence of an extension 
$\widehat{\vc{f}}^\ast_i(\vc{u})$ $(i \in \calU)$
in (\ref{eq-f_i-extend}) and the antiderivative
$\widehat{F}^\ast(\vc{u})$ of 
$\sum_{i \in \calU} \widehat{\vc{f}}^\ast_i(\vc{u})$
satisfying several regularity properties. 

\begin{lem}\label{lem-F^ast}
There exist functions  
$\widehat{\vc{f}}^\ast_i(\vc{u})$ $(i \in \calU)$
and $\widehat{F}^\ast(\vc{u})$
on $(\bbR^3)^U$ such that 
$\widehat{\vc{f}}^\ast_i(\vc{u}) = \widehat{\vc{f}}_i(\vc{u})$ and 
$\widehat{F}^\ast(\vc{u}) \approx \widetilde{F}(\vc{u})$ 
in $(\Omega^3)^U$ and 
\begin{enumerate}
\item $\widehat{\vc{f}}^\ast_i(\vc{u})$ is bounded on $(\bbR^3)^U$
for all $i \in \cal U$; 
\item $\widehat{\vc{f}}^\ast_i(\vc{u})$ is Lipchitz continuous on $(\bbR^3)^U$
for all $i \in \cal U$; 
\item $\nabla \widehat{F}^\ast(\vc{u}) = \sum_{i \in \calU} \widehat{\vc{f}}^\ast_i(\vc{u})$; and
\item $\lim_{\|\vc{u}\|\to\infty}\widehat{F}^\ast(\vc{u}) = - \infty$. 
\end{enumerate}
\end{lem}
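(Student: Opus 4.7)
The plan is to construct the extensions in two stages: first build the antiderivative $\widehat{F}^\ast$ on all of $(\bbR^3)^U$ with the prescribed behavior inside $(\Omega_3)^U$ and outside $\overline{D}$, and then define the individual extensions $\widehat{\vc{f}}^\ast_i$ as a smooth partition of its gradient. Concretely, I would first choose $D$ to be an open neighborhood of $(\Omega_3)^U$ and introduce a $C^\infty$ cutoff function $\psi:(\bbR^3)^U\to[0,1]$ equal to $1$ on $(\Omega_3)^U$ and equal to $0$ outside $D$. Using the antiderivative $\widehat{F}$ from the discussion following \eqref{eq-nabla-F=sum-tilde-f_i}, together with the signed distance potential $\varphi(\vc{u}):=-UJ\,\mathrm{dist}(\vc{u},(\Omega_3)^U)$, I would set
\begin{equation*}
\widehat{F}^\ast(\vc{u}) := \psi(\vc{u})\,\widehat{F}(\vc{u}) + (1-\psi(\vc{u}))\bigl(\varphi(\vc{u})+C_0\bigr),
\end{equation*}
choosing the constant $C_0$ so the pieces match continuously across the transition layer $D\setminus(\Omega_3)^U$.

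Once $\widehat{F}^\ast$ is fixed, the extensions $\widehat{\vc{f}}^\ast_i$ are defined to equal the original $\widehat{\vc{f}}_i$ on $(\Omega_3)^U$ (where they already satisfy \eqref{eq-nabla-F=sum-tilde-f_i}), to equal $J\breve{\vc{u}}$ on $\overline{D}$, and to be interpolated on the buffer $D\setminus(\Omega_3)^U$ via the same cutoff $\psi$ so that their sum agrees with $\nabla \widehat{F}^\ast$. Property~(iii) then holds by construction; property~(iv) is immediate from $\varphi(\vc{u})\to-\infty$ as $\|\vc{u}\|\to\infty$.

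For boundedness (i) and Lipschitz continuity (ii), I would argue as follows. Inside the compact region $(\Omega_3)^U$, the approximated coverage $\widetilde{C}_{i,y}(\vc{u})$ from Lemma~\ref{lem-A} is a finite sum of products of LoS/NLoS probabilities and Laplace-transform factors that are $C^\infty$ in $\vc{u}$ because $\varepsilon_0>0$ bounds the path-loss denominators away from zero; hence both $\widetilde{C}_{i,y}$ and its derivatives $\widetilde{\vc{c}}^j_{i,y}$ are bounded and Lipschitz, and integrating against the bounded intensity $\widetilde{\lambda}_k$ over the Voronoi cell gives the same regularity for $\widehat{\vc{f}}_i$. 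On $\overline{D}$, $J\breve{\vc{u}}$ has norm $J$ and is Lipschitz because the metric projection onto the closed convex set $(\Omega_3)^U$ is $1$-Lipschitz. Choosing $\psi$ with bounded gradient and matching the two pieces across the transition layer then yields the global Lipschitz bound.

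The main obstacle is the transition layer. The two regimes were designed independently---the inside regime only satisfies \eqref{eq-nabla-F=sum-tilde-f_i} approximately, while the outside regime uses a piecewise linear potential whose gradient is only defined almost everywhere on the cut locus of $(\Omega_3)^U$. Ensuring that $\psi$ smooths both issues simultaneously while preserving the identity $\sum_i \widehat{\vc{f}}^\ast_i = \nabla \widehat{F}^\ast$ requires a careful distribution of the cross-terms $(\nabla\psi)\,(\widehat{F}-\varphi-C_0)$ among the individual $\widehat{\vc{f}}^\ast_i$, which I would handle by defining the interpolation at the level of the joint potential first and splitting its gradient uniformly (e.g., one $U$-th to each agent in the buffer). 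The resulting definition coincides with \eqref{eq-f_i-extend} on the two prescribed regions and fulfills all four regularity conditions.
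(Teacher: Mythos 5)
Your construction is sound in outline but takes a genuinely different route from the paper. The paper works \emph{field-first}: it shows each $\widehat{\vc{f}}_i$ is bounded and Lipschitz on $(\Omega_3)^U$, glues on the prescribed value $J\breve{\vc{u}}$ over $\overline{D}$, invokes Tietze's extension theorem to obtain each $\widehat{\vc{f}}^\ast_i$ on all of $(\bbR^3)^U$, and only afterwards asserts that an antiderivative $\widehat{F}^\ast$ with properties (iii) and (iv) can be built from the sum. You work \emph{potential-first}: blend $\widehat{F}$ with the distance potential $\varphi(\vc{u})=-UJ\,\mathrm{dist}(\vc{u},(\Omega_3)^U)$ through a cutoff $\psi$, then split $\nabla\widehat{F}^\ast$ among the agents. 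Your route buys something real: property (iii) holds \emph{by construction}, whereas after $U$ independent Tietze extensions the sum $\sum_i\widehat{\vc{f}}^\ast_i$ has no a priori reason to be a conservative field, so the paper's final step is the weakest link of its own argument. Conversely, the Tietze route buys you something you are missing: it never needs the original data outside the closed set on which it is defined. Two further points of care: (a) only the cross-term $(\nabla\psi)(\widehat{F}-\varphi-C_0)$ should be split uniformly; if you assign one $U$-th of the \emph{entire} gradient to each agent in the buffer, then at $\partial(\Omega_3)^U$ the buffer value $\frac{1}{U}\sum_j\widehat{\vc{f}}_j$ does not match the required inner value $\widehat{\vc{f}}_i$, destroying continuity of the individual fields (keeping $\psi\widehat{\vc{f}}_i+(1-\psi)J\breve{\vc{u}}$ plus a uniform share of the cross-term does work, since $\nabla\psi$ vanishes on $\partial(\Omega_3)^U$); (b) the constant $C_0$ cannot make the two potentials ``match across the layer''---it is simply unnecessary, as the partition-of-unity blend is smooth regardless.

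The one genuine gap is that your formulas $\psi\widehat{F}$ and $\psi\widehat{\vc{f}}_i$ evaluate $\widehat{F}$ and $\widehat{\vc{f}}_i$ on the buffer $D\setminus(\Omega_3)^U$, where they are not defined: the integrals in (\ref{eq-def-tilde-f_i}) and the quantities in Lemma~\ref{lem-A} are only given for $\vc{u}\in(\Omega_3)^U$. You must either argue that the defining expressions extend smoothly to an open neighborhood of $(\Omega_3)^U$ in $(\bbR^3)^U$ (plausible since $\varepsilon_0>0$ keeps the path-loss regular, but it requires checking the elevation-angle and Voronoi-cell formulas off the admissible set), or first perform a Lipschitz extension of each $\widehat{\vc{f}}_i$ onto the buffer---e.g.\ by McShane or Kirszbraun, which is essentially the step the paper delegates to Tietze (note that plain Tietze preserves boundedness but not the Lipschitz constant, an imprecision shared by the paper). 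With that repair, your argument delivers all four properties.
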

\begin{proof}
Due to space limitations, we provide only  a summary of the proof. 
In what follows, we fix $i \in \calU$ arbitrary. 
We can confirm that 
$\widetilde{\vc{c}}^j_{i,y}(\vc{u})$ $(j\in \calU)$ are all 
bounded on $(\Omega_3)^U$. 
We can also confirm that $\vc{\mu}_y(\vc{x})$ is bounded on $\vc{y} \in \Omega_2$
(see (\ref{eq-mu_y})) and $\calV_i(\vc{u})$ is a closed set. 
It then follows from (\ref{eq-def-tilde-f_i}) and (\ref{eq-lambda-mu}) that 
$\widehat{\vc{f}}_i(\vc{u})$ 
is bounded on $(\Omega_3)^U$. Furthermore, 
we can prove that 
$\widehat{\vc{f}}_i(\vc{u})$ is continuously differentiable 
in $(\Omega_3)^U$ and 
the derivatives are bounded on $(\Omega_3)^U$,
which indicates that $\widehat{\vc{f}}_i(\vc{u})$ is Lipschitz continuous 
with a certain Lipschitz constant $L_0$. 
Thus, if we extend the domain of $\widehat{\vc{f}}_i(\vc{u})$ 
to $(\Omega_3)^U \cup \overline{D}$ similarly as in  (\ref{eq-f_i-extend}), 
$\widehat{\vc{f}}_i(\vc{u})$ is bounded and Lipschitz continuous with $L_0$
on  $(\Omega_3)^U \cup \overline{D}$. 
It thus follows from Tietze's extension theorem that 
there exists a bounded and Lipschitz continuous function 
 $\widehat{\vc{f}}^\ast_i(\vc{u})$ on $(\bbR^3)^U$ with $L_0$ 
such that $\widehat{\vc{f}}^\ast_i(\vc{u}) = \widehat{\vc{f}}_i(\vc{u})$
in $(\Omega_3)^U \cup \overline{D}$. 
These results indicate that the statements (i) and (ii) hold.  
Furthermore, by using $\sum_{i\in\calU}\widehat{\vc{f}}^\ast_i(\vc{u})$
and (\ref{eq-nabla-F=sum-tilde-f_i}), we can construct
a continuously differentiable function 
$\widehat{F}^\ast(\vc{u})$ on $(\bbR^3)^U$ such that 
$\widehat{F}^\ast(\vc{u}) \approx \widetilde{F}(\vc{u})$ in $(\Omega^3)^U$ and 
the statements (iii) and (iv) hold,
which completes the proof. 
%
\end{proof}

%
%
%

%

Combining Lemma~\ref{lem-F^ast} with the facts that 
$\widehat{\vc{f}}_i(\vc{u})$ is differentiable in $(\Omega_3)^U$
and the sequence $\{\calG(t)\}$ is strongly connected, i.e., 
the union of their edge sets  is strongly connected, 
we can apply Theorem~5 in \cite{Tata17} 
to $\widehat{F}^\ast(\vc{u})\ (\approx 
\widetilde{F}(\vc{u}))$ and $\widehat{\vc{f}}^\ast_i(\vc{u})$
 in Lemma~\ref{lem-F^ast}
and prove the convergence of the algorithm. 
\begin{thm}
Each $\vc{u}^{[i]}(t)$ $(i \in \calU)$ and 
the average state variable ${1 \over U} \sum_{i=1}^U \vc{\xi}^{[i]}(t)$
in the distributed push-sum algorithm in (\ref{eq-update-u_i}) 
converge to a point in the set of local maxima of 
$\widehat{F}^\ast(\vc{u})$
from any initial state. 
\end{thm}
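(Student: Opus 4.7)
The plan is to reduce the claim to a direct application of Theorem~5 in \cite{Tata17}, whose hypotheses fall into three groups: regularity of the surrogate objective and of the per-node gradient proxies, summability of the step-size together with properties of the stochastic perturbation, and connectivity of the time-varying communication graph. I would verify each group in turn and then invoke the theorem verbatim.

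For the objective-side regularity, Theorem~5 in \cite{Tata17} asks for a continuously differentiable $\widehat{F}^\ast$ on $(\bbR^3)^U$ satisfying $\nabla \widehat{F}^\ast = \sum_{i\in\calU} \widehat{\vc{f}}^\ast_i$ and $\widehat{F}^\ast(\vc{u}) \to -\infty$ as $\|\vc{u}\| \to \infty$, together with global boundedness and Lipschitz continuity of each $\widehat{\vc{f}}^\ast_i$ on $(\bbR^3)^U$. These are precisely items (i)--(iv) of Lemma~\ref{lem-F^ast}, so a single invocation discharges the entire group. Because $\widehat{F}^\ast$ agrees with $\widetilde{F}$ on the feasible set $(\Omega_3)^U$, statements about local maxima of $\widehat{F}^\ast$ translate back to the quantity of interest.

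For the algorithmic conditions, the choice $a(t) = O(t^{-\nu})$ with $\nu \in (1/2, 1)$ yields $\sum_t a(t) = \infty$ and $\sum_t a(t)^2 < \infty$, which are the standard Robbins--Monro summability requirements used in \cite{Tata17}. The perturbation $\vc{\kappa}^{[i]}(t)$ is i.i.d.\ across $i$ and $t$ with zero mean and unit variance, matching the noise hypothesis. Since each UAV $j$ forwards the fraction $1/d_j(t)$ of its state to every neighbor in the push-sum step, the induced mixing matrices are column-stochastic with nonzero entries uniformly bounded below by $1/(U-1)$, another required property.

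The delicate point is the graph hypothesis: Theorem~5 in \cite{Tata17} demands that $\{\calG(t)\}$ be $B$-strongly connected. The paragraph preceding the theorem takes $\{\calG(t)\}$ to be strongly connected as a standing hypothesis, so I would simply invoke it. The subtlety, and the hardest part of the whole plan, is that $\calG(t)$ is the signal-weighted Delaunay graph induced by the evolving configuration $\vc{u}(t)$, so strong connectivity is not automatic along trajectories of the algorithm; a fully rigorous treatment would either restrict $\Omega_3$ so that every admissible configuration has a strongly connected signal-weighted Delaunay graph, or exploit the boundary-repelling extension on $\overline{D}$ in (\ref{eq-f_i-extend}) to preclude disconnecting configurations. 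With all three groups in place, Theorem~5 in \cite{Tata17} applies, yielding almost-sure convergence of every $\vc{u}^{[i]}(t)$ and of $(1/U)\sum_i \vc{\xi}^{[i]}(t)$ to a point in the local-maxima set of $\widehat{F}^\ast$, which is exactly the theorem's assertion.
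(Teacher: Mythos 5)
Your proposal is correct and follows essentially the same route as the paper: both reduce the theorem to a direct application of Theorem~5 in \cite{Tata17}, using Lemma~\ref{lem-F^ast} for the regularity of $\widehat{F}^\ast$ and the $\widehat{\vc{f}}^\ast_i$ and taking strong connectivity of $\{\calG(t)\}$ as a standing hypothesis. Your additional checks (step-size summability, the noise model, column-stochasticity of the push-sum weights) and your observation that connectivity of the state-dependent signal-weighted Delaunay graph is assumed rather than established along trajectories are consistent with, and somewhat more careful than, the paper's one-paragraph argument.
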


\section{Simulation Results}\label{sec-simu}
We evaluate our distributed 3D UAV deployment method 
by conducting numerical simulations. 
In each simulation, 
we assume that $U=9$ UAVs are deployed over a ($5000$ m)${}^2$ area ($\Omega_2$)
with a maximum altitude 1500~m. 
We adopted values similar to those in
\cite{Bai15, Zhu18} for the parameters of the channel model
associated with the $\LoS$ and $\NLoS$ conditions. 
We also considered sub-urban, urban, and dense-urban scenarios by 
selecting the values of $b_0$ and $b_1$ presented in \cite{Bor16}. 
The other parameters are listed in Table~\ref{tab:param} unless otherwise stated. 
In addition, we randomly generated spatially correlated
intensity by the following procedure. 
We first divided the whole area
into $50 \times 50 $ small grids, and then sampled the values of $\lambda(\vc{y})$
in each grid from a multivariate
Gaussian distribution whose covariance matrix is determined
by the kernel function in (\ref{eq-G-kernel}).


\begin{table}[ht]
\caption{Simulation Parameters}
\centering
\begin{tabular}{p{4em}p{10.580em} |p{6.55em}p{5.2em} } \hline
\textbf{Parameter}                & \textbf{Value}   & \textbf{Parameter}              &\textbf{Value}\\\hline\hline
$b_0$, $b_1$             &  0.16, 9.61 (urban)    & $\beta_{\rmL}$, $\beta_{\rmN}$   & 0.092, 0.035 \\\cline{3-4}
                         &  0.43, 4.88 (sub urban)  & $m_{\rmL}$, $m_{\rmN}$           & 3, 2    \\\cline{3-4}
           &  0.11, 12.08 (dense urban)                      &    $\alpha_{\rmL}$, $\alpha_{\rmN}$ & 2, 3    \\\hline
$\mu$                           &  100~m${}^{-2}$             & $\sigma$     & $-70$~dBm\\\hline
$A_0$, $A_1$ & 10,  (500)$^2$                &  length of area  & 5000~m\\\hline  
$\Theta$       &  0  dB           &  default altitude  & 200~m\\\hline  
\end{tabular}
\label{tab:param}
\end{table}


%
\subsection{Effectiveness of distributed push sum algorithm}\label{subsec-obj}
%


\begin{figure*}[t]
\centering
\begin{minipage}[t]{0.675\hsize}
\centering
\includegraphics[width=5in]{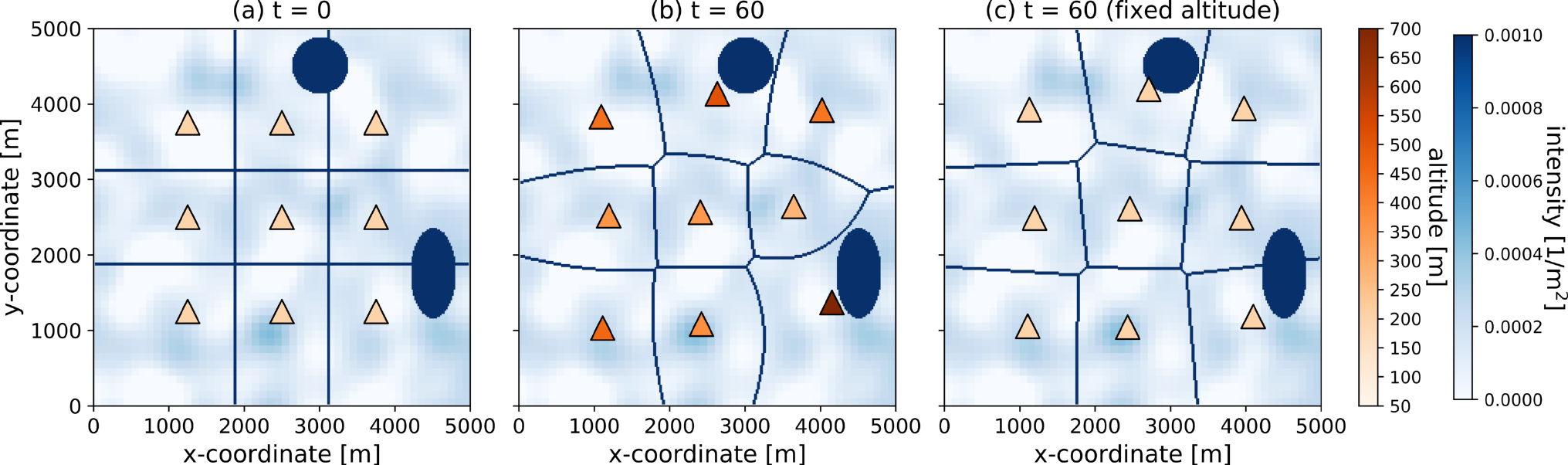}
\caption{3D UAV deployment in urban scenario
(a) $t = 0$ and (b) $t=60$, and (c) fixed altitude scenario $t= 60$.  
Triangles represent horizontal positions of UAVs and their
color depth represent altitudes. Depth of background color expresses 
intensity of UEs and dark ellipses represent artificially added hotspots. 
}
\label{fig:exp-1-U-Theta0_step}
\end{minipage}
~
\begin{minipage}[t]{0.305\hsize}
\centering
\includegraphics[width=2.0in]{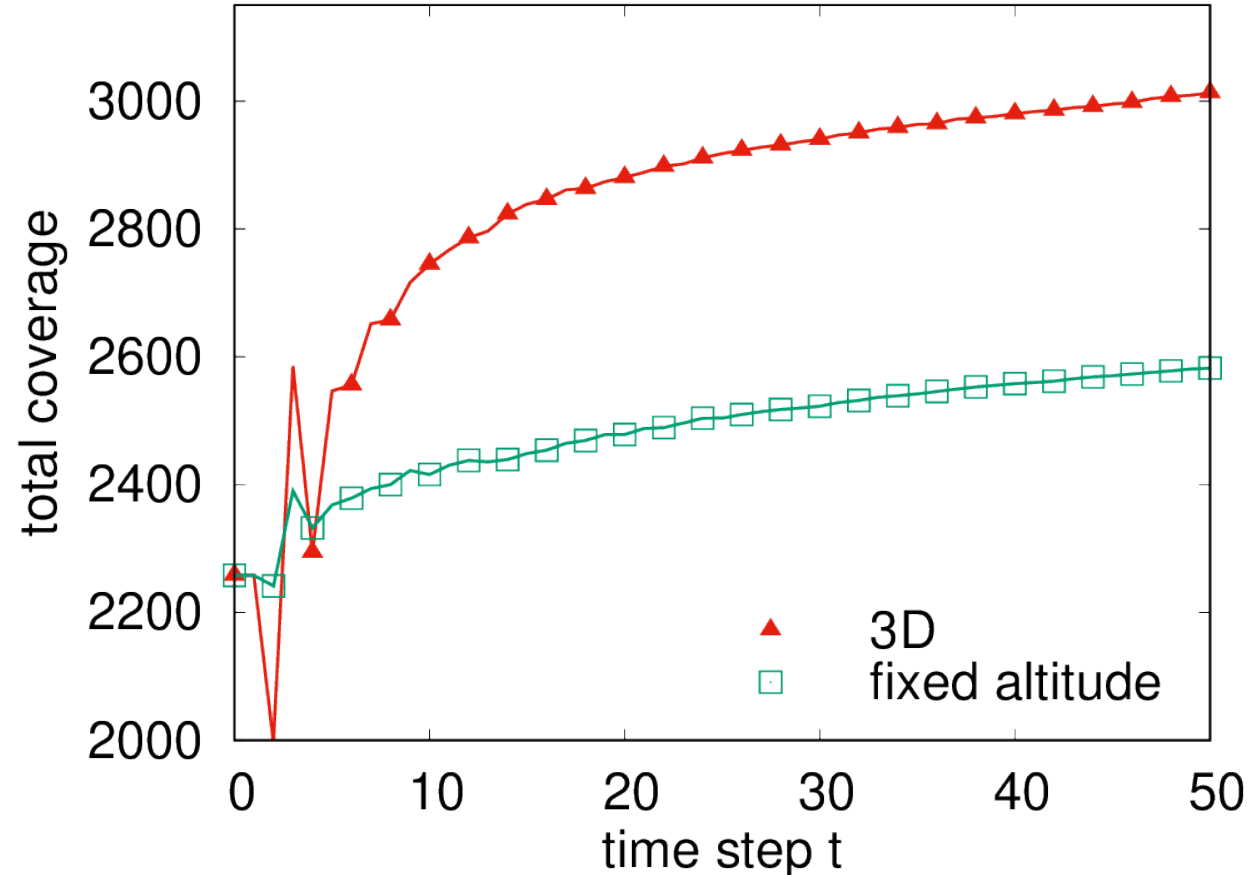}
\caption{Comparison of process of total coverage
with 3D deployment and fixed altitude scenarios. }
\label{fig:exp1-Theta0_alt}
\end{minipage}
\vskip -12pt
\end{figure*}

\begin{figure}[!t]
\centering
\begin{minipage}[t]{0.492\hsize}
\centering
\includegraphics[width=1.75in]{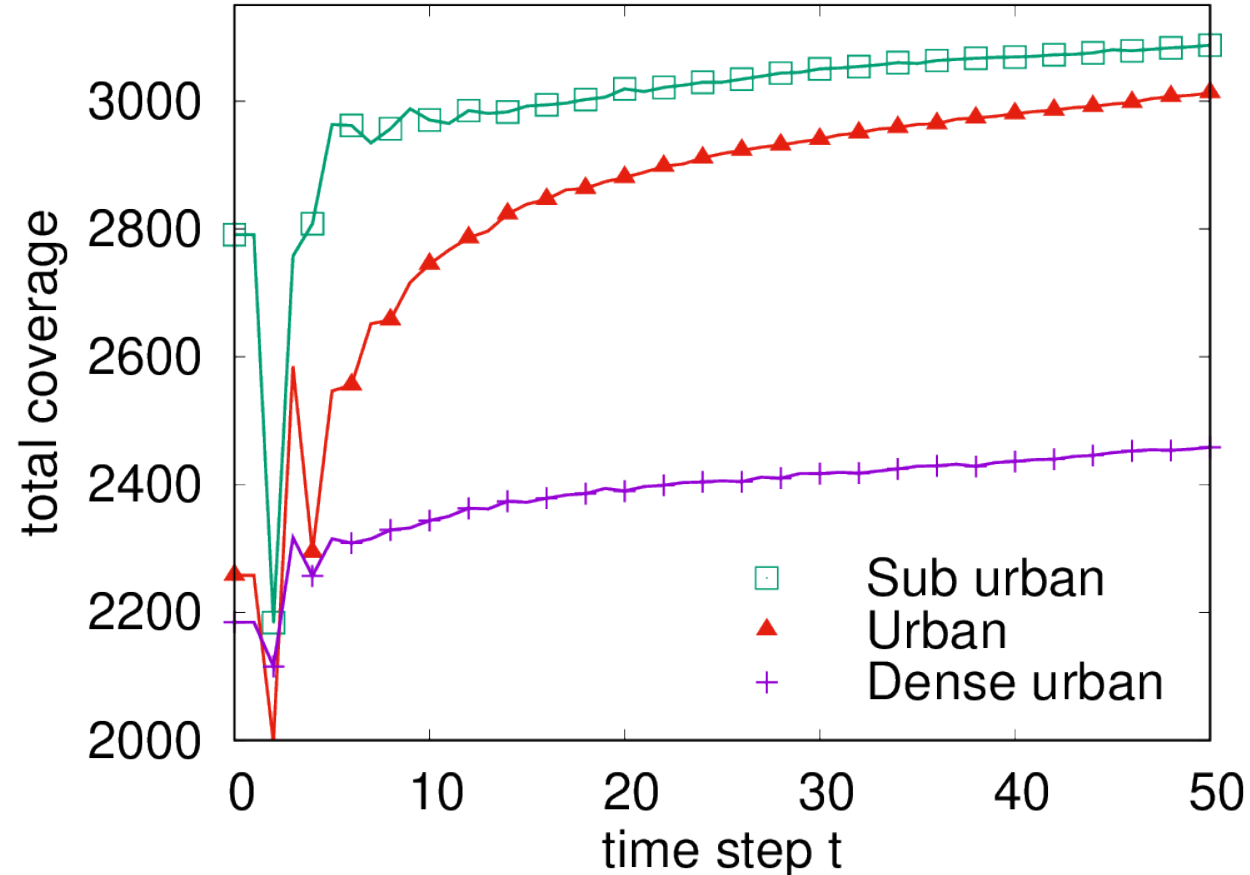}
\caption{Comparison of process of total coverage with different scenarios. }
\label{fig:exp1-Theta0}
\end{minipage}
\begin{minipage}[t]{0.492\hsize}
\centering
\includegraphics[width=1.75in]{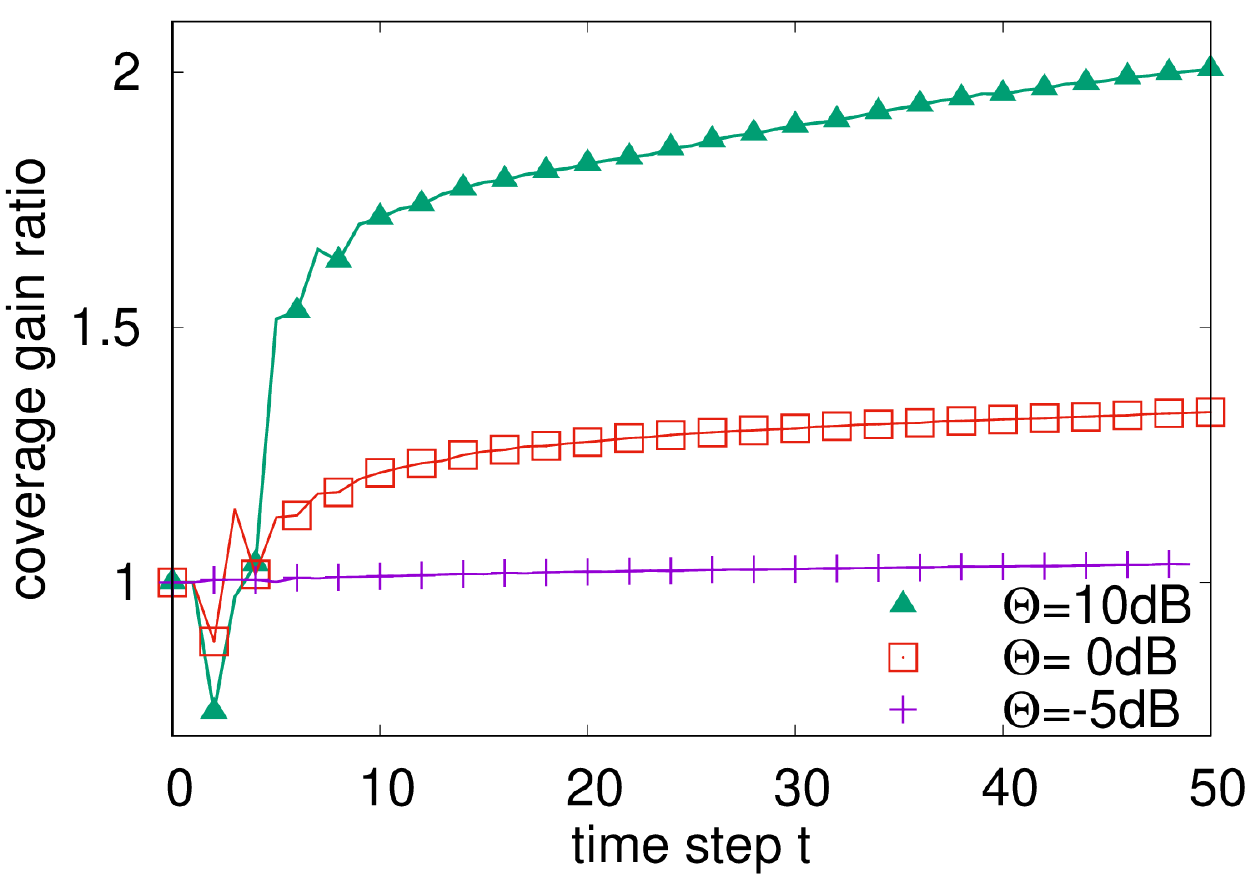}
\caption{Comparison of process of ratio of coverages
with different $\Theta$.}
\label{fig:exp1-U}
\end{minipage}
\vskip -7pt
\end{figure}

%

\begin{figure}[!t]
\centering
\includegraphics[width=3.5in]{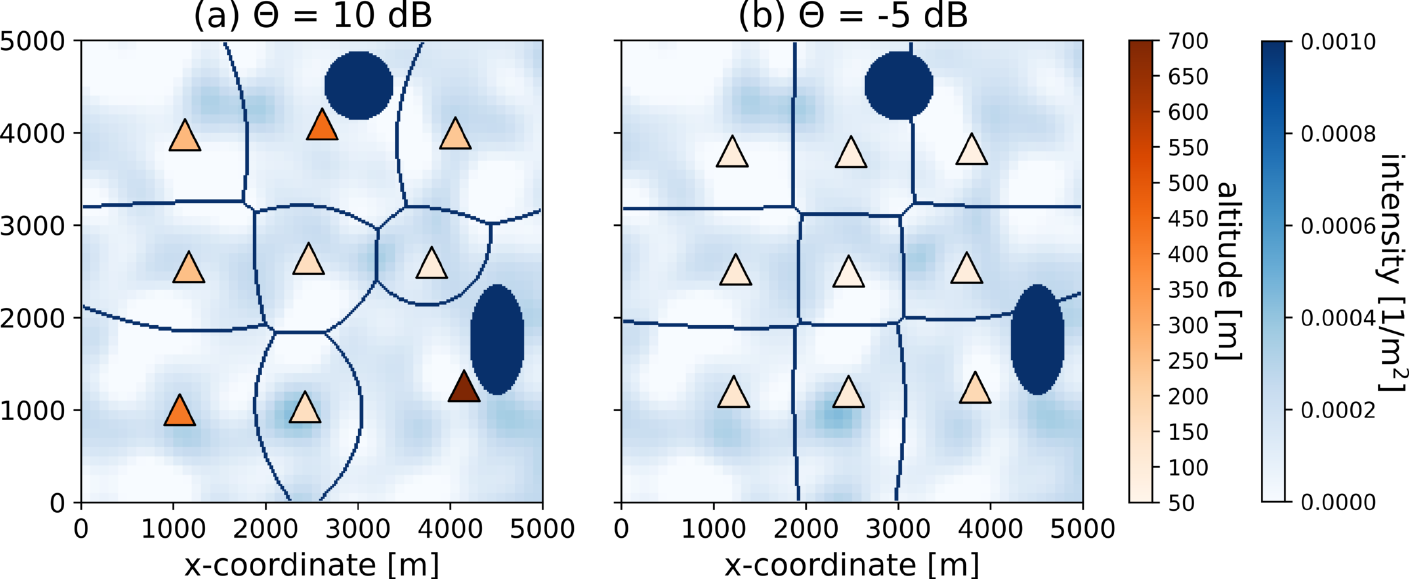}
\vskip -7pt
\caption{
3D UAV deployment in urban scenario at $t = 60$.
}
\label{fig:exp-Theta}
\vskip -7pt
\end{figure}

\begin{figure}[!t]
\centering
\includegraphics[width=3.5in]{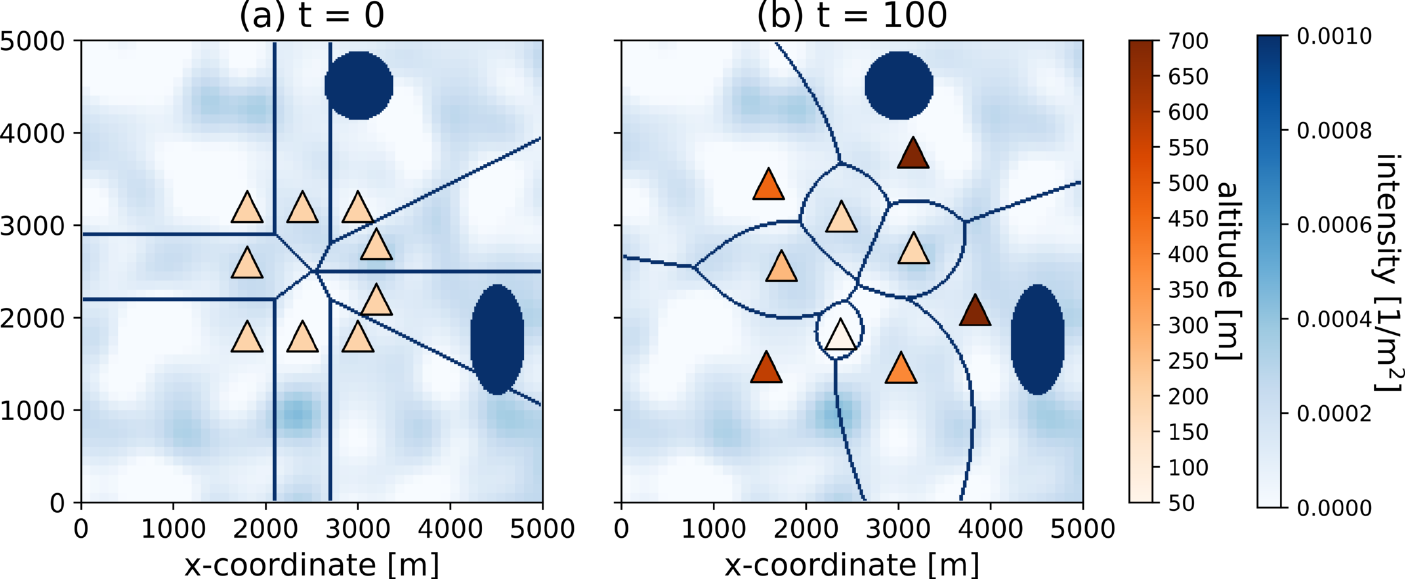}
\vskip -7pt
\caption{
3D UAV deployment starting from centered positions with $\Theta = 0$ dB 
at (a) $t = 0$ and (b) $t = 100$.}
\label{fig:exp-1-p}
\vskip -7pt
\end{figure}

%
We first focus on 
the performance of the distributed push-sum algorithm part
and demonstrate its effectiveness. We also investigate the impacts of
several parameters on the performance and the resulting UAV deployment. 
To achieve this, we focused on a sensing period, i.e., 
a fixed $T_k$, and assumed that all the UAVs knew the true $\lambda(\vc{y})$, 
i.e., $\lambda(\vc{y}) \equiv \widetilde{\lambda}_k(\vc{y})$
($\vc{y} \in \Omega_2$). 
We then set artificial hotspots, in which the intensity 
of UEs is $10\cdot\mu$ 
(see Fig.~\ref{fig:exp-1-U-Theta0_step} (a)). 
In addition, the initial positions of the UAVs were uniformly arranged with 
the equal intervals. 

Fig.~\ref{fig:exp-1-U-Theta0_step} shows the 3D placement of the UAVs
at $t = 0, 60$ in the urban scenario.
We also plotted in Fig.~\ref{fig:exp-1-U-Theta0_step} (c) 
the case where the altitudes of the UAVs
were fixed at the default altitude
by controlling only their horizontal movements. %
The triangles in the graphs express 
the positions of the UAVs, and their altitudes are represented 
by their color depth. Each cell corresponds to each UAV-cell. 
The background color depth expresses 
the intensity of UEs, and the dark ellipses correspond to 
the artificially added hotspots. 
We also illustrate the process of the total coverage 
(i.e., the objective function) 
in Fig.~\ref{fig:exp1-Theta0_alt}. 
%
Fig.~\ref{fig:exp-1-U-Theta0_step} (a) and (b) reveal that 
the 3D positions of the UAVs and associated shapes of the UAV-cells
dynamically adjusted to the hotspots 
so that each hotspot was fitted in a UAV-cell to avoid interference. 
As a result, the total coverage was significantly improved from the initial
deployment (approximately 35 \% higher). 
On the other hand, the total coverage was less improved 
in the fixed altitude scenario. 
Fig.~\ref{fig:exp-1-U-Theta0_step} (b) also shows that 
the altitudes of the UAVs serving hotspots were higher than
those of the other UAVs. Moreover, the altitudes of the UAVs changed
more aggressively than their horizontal positions. 
This is because as the altitude of a UAV
increases, the LoS probability increases,
and thus the UAV-cell and the coverage probability
also increase. 
These results indicate that the total coverage
is more sensitive to the vertical movements of the UAVs
than to their  horizontal ones. 
Thus, their 3D position must be considered for
an efficient UAV deployment optimization. 

Fig.~\ref{fig:exp1-Theta0} shows the 
process of the total coverage in 
sub-urban, urban, and dense urban scenarios. 
Although the total coverage increased in all cases, 
their improvements varied depending on the scenarios.
The reason for this can be considered as follows. 
Since the coverage area of a UAV is 
larger in the sub-urban scenario than the other scenarios~\cite{Hour14, Bor16}, 
the whole area was sufficiently covered even with
an initial placement. Thus, the benefit of our method
was smaller in the sub-urban scenario than the urban one. On the other hand, 
in the dense-urban scenario, ground UEs could not be covered by $U = 9$ UAV-BSs
due to the small coverage area. 

%


We next investigate the impacts of the SINR threshold $\Theta$. 
To see the difference more clearly, 
we calculated the ratio of the total coverage at each step 
to that at $t = 0$. 
Fig.~\ref{fig:exp1-U} expresses the processes
of the ratio with different $\Theta$. 
The graph shows that the improvement with higher
$\Theta$ becomes larger, whereas that with $\Theta = -5$ dB
was highly limited. 
This is because with low $\Theta$, 
the total coverage was already sufficient 
even in the initial step, and thus 
the objective function did not exhibit much potential for improvement. 
Meanwhile, when $\Theta = 10$ dB, 
the total coverage became approximately two times of that in the initial step. 
This result indicates that our optimization method
is highly efficient for a high SINR regime. 
We also illustrate the 3D deployment of the UAVs in Fig.~\ref{fig:exp-Theta}
in each case. 
The deployment with $\Theta = 10$ dB was similar to that 
with $\Theta = 0$ dB,
whereas the altitudes of the UAVs with $\Theta = -5$ dB 
were much lower than the other cases.


Finally, we demonstrate the impact of the initial positions of the UAVs. 
Fig.~\ref{fig:exp-1-p} shows the 3D deployment of the UAVs at 
$ t = 0 , 100$ 
when their initial positions were gathered at the
center. The figures show that
as the time elapsed, the positions of the UAVs
adjusted to the hotspots and 
the UAVs maintained certain distances among themselves to reduce interference. 
In addition, the altitudes of the UAVs serving the hotspots
became high again, whereas the resulting deployment 
was different from that in the previous scenario.

\subsection{Effect of crowd density estimation}
We next focus on the effect of the crowd density estimation part. 
Unlike the previous scenario, we assumed that 
the intensity function is available only at 
randomly generated GSs (grids where GSs exist). 
The entire intensity was calculated
by the crowd density estimation method
described in Section~\ref{subsec-crowd}.
Since the number of GSs is likely to impact the performance, 
we evaluated it by considering
the GS deployment ratio $r_{\mathrm{GS}} \in [0,1]$. 
This is defined as the ratio of the grids where
GSs exist to the total grids. 
We regarded each grid as the sensing range of a GS\footnote{In our example, 
the area of each grid 
is 100~m${}^2$ (2500 grids).} and assumed that 
the intensity in each GS's grid was obtained correctly. 
To evaluate the impact of $r_{\mathrm{GS}}$, 
we consider the coverage gain,
which was calculated as the difference between
the total coverage at $t = 0$ and $30$. 
We first calculated the coverage gain in the
case with complete information of true intensity, 
(i.e., $r_{\mathrm{GS}} = 1$) as a baseline. 
We then calculated 
the ratio of the coverage gain 
with each $r_{\mathrm{GS}}$ to the baseline. 
Fig.~\ref{fig:exp2} illustrates the result of 
the ratio of coverage gain. 
For each $r_{\mathrm{GS}}$, we generated 10 samples
of the GS placement. 
The figure shows that as expected,
as $r_{\mathrm{GS}}$ increases, the performance of  
of our method also increases. This is because 
the proposed algorithm optimizes the positions
of UAVs by using $\widetilde{\lambda}(\vc{y})$
(see (\ref{eq-J(u)})). 
Thus, if the intensity function was incorrectly estimated, 
the proposed method failed to update
the UAV placement effectively for improving
the objective function. 
However, the simulation results demonstrate that 
even a small $r_{\mathrm{GS}}$ (e.g., $r_{\mathrm{GS}}=0.03$, 
3 GS/km${}^2$) achieved 98\% of the baseline in the urban scenario. 
This indicates that our method 
could improve the total coverage even with a 
limited number of GSs.

\begin{figure}[!t]
\centering
\includegraphics[width=2.2in]{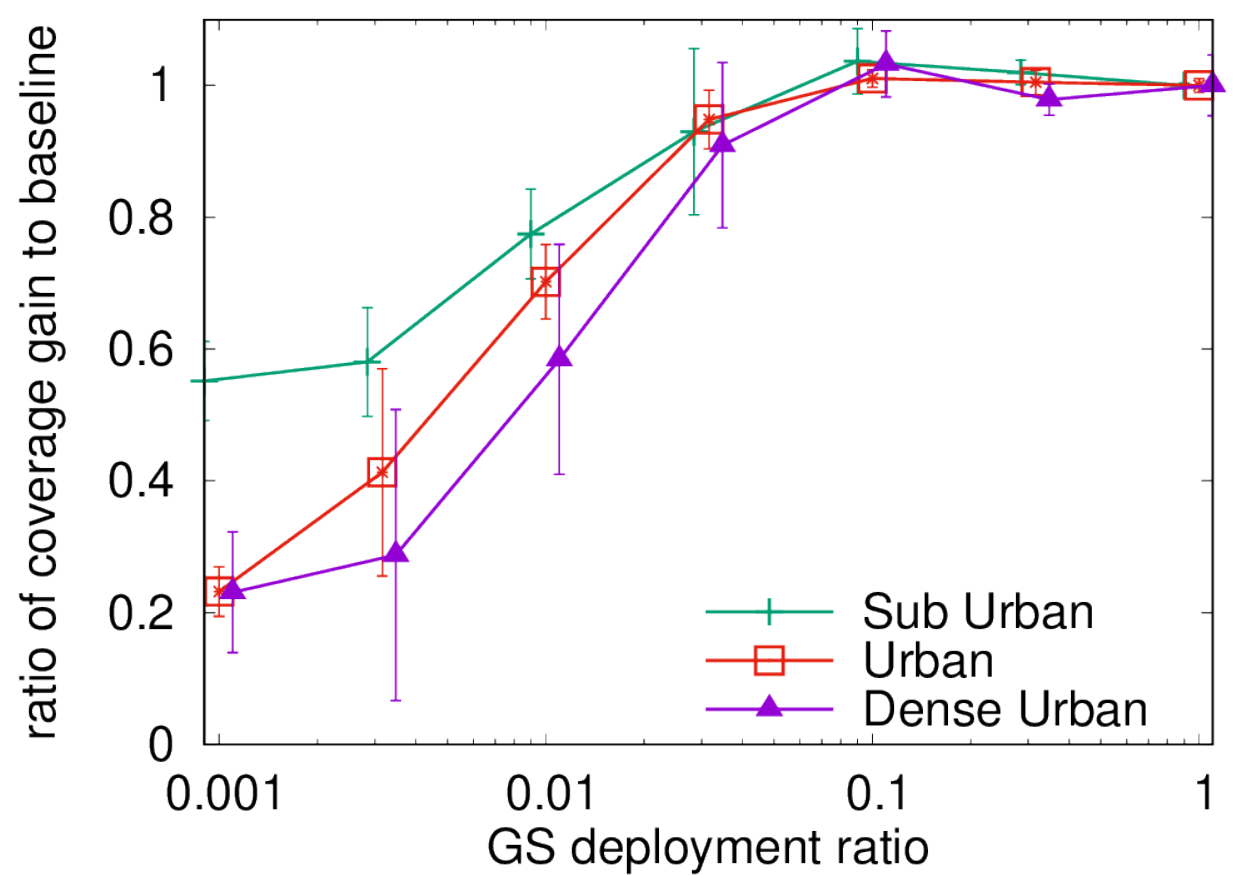}
\caption{
Ratio of coverage gain to baseline versus
GS deployments ratio $r_{GS}$. 
Error bars represent 95~\% confidence interval. 
 }
\label{fig:exp2}
\vspace{-2.5mm}
\end{figure}

\subsection{Dynamic network scenario}\label{subsec-dynamic}

\begin{figure*}[!t]
\centering
\includegraphics[width=6.8in]{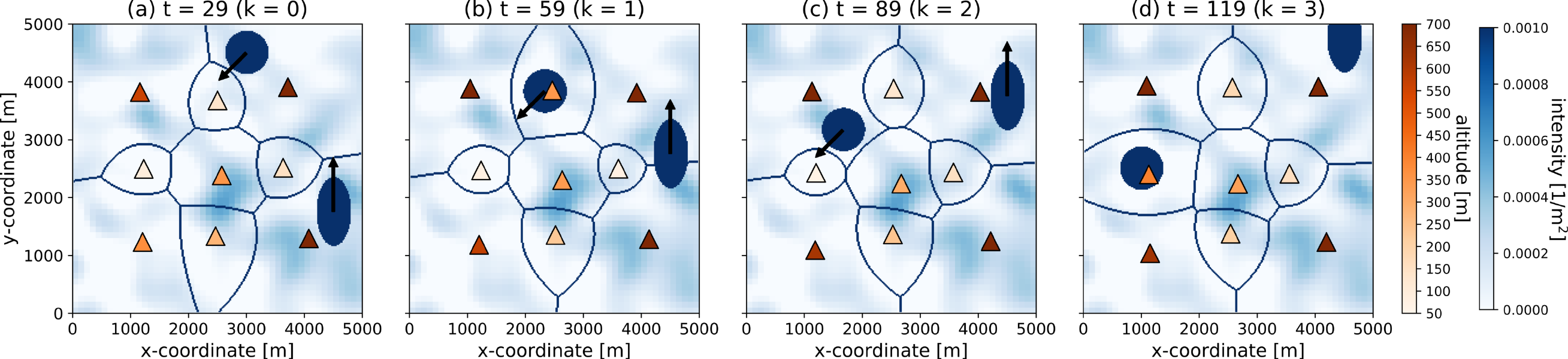}
\caption{
3D UAV deployment process in dynamic network scenario
at (a) $t = 29~(k = 0)$; (b) $t=59~(k = 1)$; (c) $t = 89~(k = 2)$ and 
(d) $t = 119~(k = 3)$. Initial deployment of UAVs was the 
same as that in Fig.~\ref{fig:exp-1-U-Theta0_step}~(a). 
Hotspots are represented as dark ellipses, and arrows represent their moving directions. 
}
\label{fig:exp-3-U}
\vskip -7pt
\end{figure*}

\begin{figure}[!t]
\centering
\includegraphics[width=2.2in]{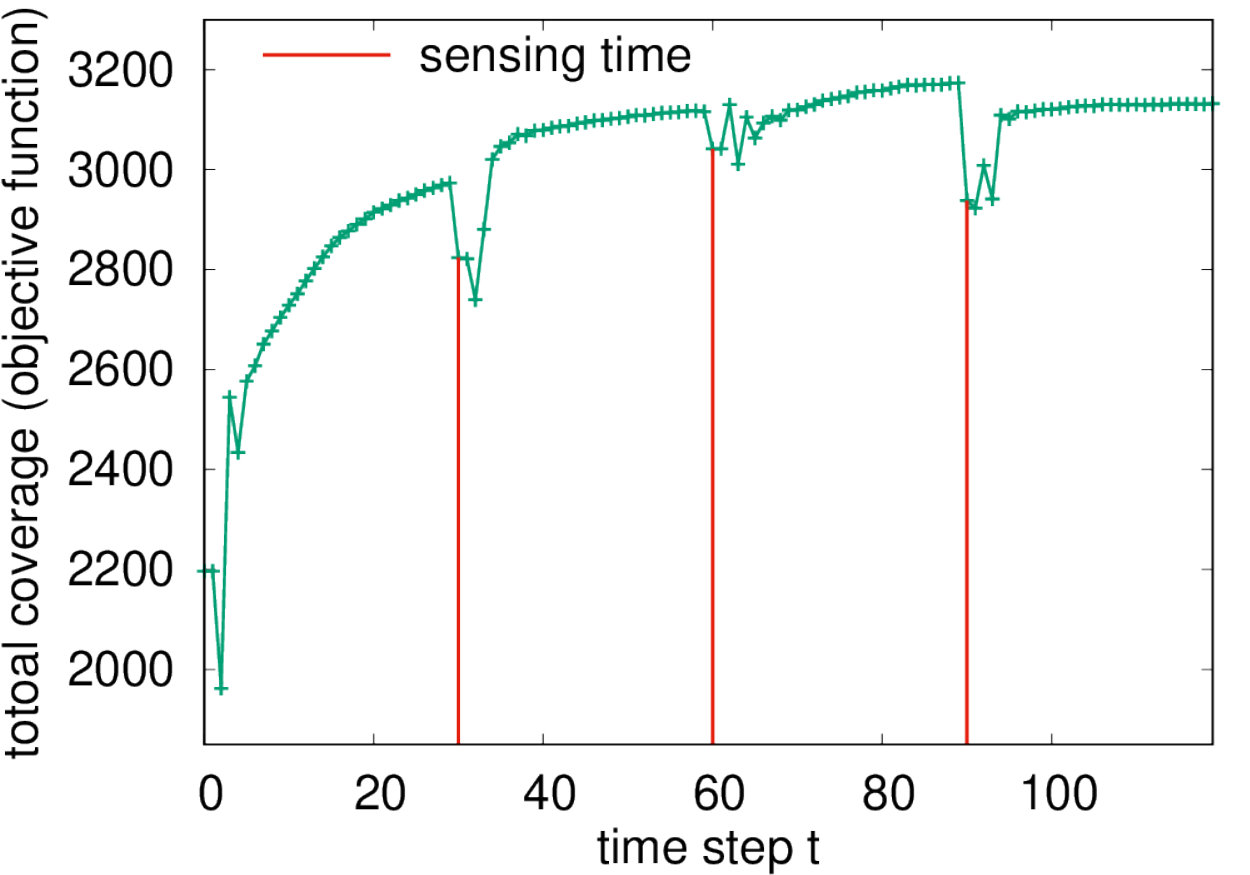}
\caption{Process of total coverage. 
Vertical lines represent sensing times. 
 }
\label{fig:exp3}
\vspace{-2.5mm}
\end{figure}

We next apply our algorithm to a more dynamic network,
in which the positions of hotspots change dynamically. 
Unlike the previous examples, 
we consider multiple sensing periods and 
assume that  hotspots move at each sensing time. 
We first artificially set hotspots on $\Omega_2$
similarly as those in Fig.~\ref{fig:exp-1-U-Theta0_step}. 
We then moved each of them in a predefined direction
at $T_k$ $(k =1,2,3)$ with $T = 30$. 
The initial deployment of UAVs $(T_0 = 0)$ 
was also the same as Fig.~\ref{fig:exp-1-U-Theta0_step}~(a). 
Fig.~\ref{fig:exp-3-U} illustrates
the 3D deployment of UAVs at the end
of each sensing period (e.g., 
(b) shows $t =59$, the end of the first sensing period ($k=1$)). 
%
We also show the process of the total coverage in Fig.~\ref{fig:exp3}. 
As the figures show, 
at each sensing time, the total coverage temporarily 
decreased due to the movement of the hotspots. However, 
as shown in Fig.~\ref{fig:exp-3-U},  
the UAVs adjusted their positions to 
the hotspot locations and
successfully increased the total coverage. 
For example, in Fig.~\ref{fig:exp-3-U} (b) and (d), (i.e., $k = 1,3$), 
the left hotspot moved to a position on the borders 
of UAV-cells. However, the UAVs changed their 3D positions 
and the shapes of the UAV-cells to reduce interference. 
Note that, at $t = 59$ (b),
the right hotspot is on the border of UAV cells. 
This is because our algorithm could not sufficiently optimize
the UAV placement during a sensing period. However, 
the total coverage  was successfully improved as 
shown in Fig.~\ref{fig:exp3}. 
Consequently, our method could respond to hotspots and 
provide coverage to UEs in a dynamic network 
in an on-demand manner.

\section{Conclusion}\label{sec-conc}
In this paper, we proposed a distributed 
UAV-BS 3D deployment method for on-demand
coverage. 
Since the specific positions of ground users
may not be obtained in real networks, 
we proposed the sensing-aided crowd 
density estimation method. 
Furthermore, by adopting total coverage 
as the performance metric, 
we developed a push-sum-based 
distributed UAV 3D deployment algorithm. 
Several simulation results revealed that
our method can improve the overall coverage
of ground users with the aid of a limited number of sensors
and can be applied to a dynamic network. 

In our problem setting, we assumed that the 
UAVs exhibit identical transmission power. Thus, 
the joint optimization of 3D UAV deployment
and energy consumption 
are topics for our future work. 
Furthermore, other physical constraints, 
such as  the capacity of UAVs~\cite{Kala16}, and
obstacle avoidance~\cite{Zhao18}
should be considered while applying 
our method to a real network. 
Moreover, simulations
considering more realistic user density,
such as road constraints, remain for future work.

  \section*{Acknowledgment}

This work was supported by JSPS KAKENHI Grant Numbers JP19K14981 and JP18K13777.



\section*{Appendix A: Proof of Lemma~\ref{lem-A}}\label{appen-proof-of-Lemma-lem-A}
By conditioning on the status of 
the desired channel between $\vc{u}_i$ and $\vc{y}$, 
$C_{i,y}(\vc{u})$ can be rewritten as
\begin{align}
&C_{i,y}(\vc{u}) =
\sum_{q_0 \in \{\rmL, \rmN\}}
\PP(q_0; \theta_{i,y})\PP(\SINR_{i,y} > \Theta \mid q_0).
\label{eq-A_{i,n}-add}
\end{align}
Since small-scale fading gain is distributed according to 
the normalized gamma distribution with the parameter $m_{q_0}$, 
by using (\ref{eq-SINR}), 
we obtain, for $q_0 \in \{\rmL, \rmN\}$
\begin{align}
&\PP(\SINR_{i,y} > \Theta \mid q_0)=
\PP\left(
{
h_{i,y}
}
\left.
> {\Theta (I_{i,y} + \sigma) \over  \ell_{q_0}(d_{i,y})}
\right|
q_0
\right)
\nonumber
\\
&\overset{(a)}{\approx}
1- 
\bbE_{I_{i,y}}\left[
\left(
1- \exp
\left(
- {\eta_{q_0}\Theta (I_{i,y} + \sigma) \over  \ell_{q_0}(d_{i,y})}
\right)
\right)^{m_{q_0}}
\right]
\nonumber
\\
&=
\sum_{k=1}^{m_{q_0}} (-1)^{k+1} {m_{q_0} \choose k}
e^{-k \sigma \gamma_{q_0, i, y}}
\bbE_{I_{i,y}}\left[
e^{
- {\gamma_{q_0, i, y} I_{i,y}}
}
\right]
\nonumber
\\
&\overset{(b)}{=}
\sum_{k=1}^{m_{q_0}} (-1)^{k+1} {m_{q_0} \choose k}
e^{-k \sigma \gamma_{q_0, i, y}}
\calL_{I_{i,y}}(k\gamma_{q_0, i, y}),
\label{eq-SINR-add-1}
\end{align}
where we approximate the tail probability of the normalized
gamma distribution in (a) by using the same method as in
\cite{Bai15, Zhu18}, 
and we use the Laplace transform of $I_{i,y}$ in (b). 
Since the small-scale fading gain and channel condition are i.i.d.
for each UAV--UE channel, 
$\calL_{I_{i,y}}(s)$ can be calculated by
\begin{align}
\calL_{I_{i,y}}(s) &= 
\bbE\left[\exp\left(- s \sum_{j \in \calU \backslash\{i\}}
h_{j,y}\ell_{q_j}(d_{j,y})
\right)\right]
\nonumber
\\
&=
\prod_{j \in \calU \backslash\{i\}}
\bbE\left[
\exp\left( - s 
h_{j,y} \ell_q(\|\vc{u}_j - \vc{y}\|)
\right)
\right]
\nonumber
\\
&=
\prod_{j \in \calU \backslash\{i\}}
\sum_{q_j \in \{\rmL, \rmN\}}\!\!\!
\PP(q_j; \theta_{j,y})
\bbE\left[
e^{-s h_{j,y} \ell_{q_j}(d_{j,y})}
\mid q_j
\right],
\nonumber
\\
&=
\prod_{j \in \calU \backslash\{i\}}
\sum_{q \in \{\LoS, \NLoS\}}\!\!\!
\PP(q; \theta_{j,y})
\left(
1 + {s \ell_{q}(d_{j,y}) \over m_{q}}
\right)^{-m_{q}},
\nonumber
\end{align}
%
where the last equality follows from the Laplace transform 
of a gamma distribution with a parameter $m_q$. 
Finally, combining the above with (\ref{eq-SINR-add-1})
and substituting it into (\ref{eq-A_{i,n}-add}), we obtain (\ref{eq-A_{i,n}}). 

\section*{Appendix B: Proof of Lemma~\ref{lem-div-F}}\label{appen-proof-of-{lem-div-F}}
In this appendix, we provide an outline of the proof due to space limitations. 
First, for each UAV $j \in \calU$,
the boundary of $\calV_j(\vc{u})$ is determined
by only UAV $j$ and its neighbors $\calN_j$
according to the definition of $\calV_j(\vc{u})$. 
Thus, by differentiating $\widetilde{F}(\vc{u})$ with respect to 
$\vc{u}_{j}$, we obtain
\begin{align}
{\rmd \widetilde{F}(\vc{u}) \over \rmd \vc{u}_{j}}
&=
\sum_{i \in \calN_j \cup \{j\}}
{\rmd  \over \rmd \vc{u}_{j}}
\int_{\calV_i(\vc{u})} 
\widetilde{C}_{i,y}(\vc{u}) 
\widetilde{\lambda}(\vc{y})
\rmd \vc{y}
\nonumber
\\
&~~~~+
\sum_{i \notin \calN_j}
\int_{\calV_i(\vc{u})} 
\widetilde{\vc{c}}^j_{i,y}(\vc{u})
\widetilde{\lambda}(\vc{y})
\rmd \vc{y}.
\nonumber
\end{align} 
Therefore, it suffices to prove that 
\begin{align}
&\sum_{i \in \calN_j \cup \{j\}}\!\!
{\rmd  \over \rmd \vc{u}_{j}}
\int_{\calV_i(\vc{u})}
\widetilde{C}_{i,y}(\vc{u}) 
\widetilde{\lambda}(\vc{y})
\rmd \vc{y}
\nonumber
\\
&\qquad
\approx
\sum_{i \in \calN_j \cup \{j\}}
\int_{\calV_i(\vc{u})}
\widetilde{\vc{c}}_{i,y}^j(\vc{u})
\widetilde{\lambda}(\vc{y})
\rmd \vc{y}. 
\label{eq-div-J-add-1}
\end{align}
%
%
%
Note that $\widetilde{C}_{i,y}(\vc{u})\widetilde{\lambda}(\vc{y})$ 
is continuously differentiable at any $\vc{u} \in (\Omega_3)^U$ 
because $\widetilde{C}_{i,y}(\vc{u})$ is continuously differentiable in $(\Omega_3)^U$
and $\widetilde{\lambda}(\vc{y})$ is bounded on $\calV_i(\vc{u})$. 
Furthermore, for any fixed $\vc{u}' \in (\Omega_3)^U$, 
$\widetilde{C}_{i,y}(\vc{u}')\widetilde{\lambda}(\vc{y})$ and 
$\widetilde{\vc{c}}_{i,y}^j(\vc{u}')\widetilde{\lambda}(\vc{y})$
are both integrable on $\Omega_2$
because $\widetilde{C}_{i,y}(\vc{u})$, $\vc{c}_{i,y}^j(\vc{u})$
and $\widetilde{\lambda}(\vc{y})$ are all bounded on $\Omega_2$. 
Furthermore, according to (\ref{eq-S_{i,y}}) and (\ref{eq-def-calC_i}), we can demonstrate that 
each $\calV_i(\vc{u})$ $(i\in\calU)$ can be represented as a set difference of 
certain strictly star-shaped sets. This is because $\Omega_2$ is a convex set and
the average signal power $S_{i,y}(t)$ is a monotonic decreasing 
and convex function of $d_{i,y}(t)$
(see (\ref{eq-S_{i,y}})). 
Therefore, we can apply Proposition~A.1 in \cite{Cort05} (see also Remark~A.2 therein)
to each integral on the left side of (\ref{eq-div-J-add-1}). 
As a result, 
 $\int_{\calV_i(\vc{u})} \widetilde{C}_{i,y}(\vc{u}) \widetilde{\lambda}(\vc{y})\rmd \vc{y}$
($i \in \calN_j\cup\{j\}$)
is continuously differentiable in $(\Omega_3)^U$. 

Moreover, we can approximate $\SINR_{i,y}(t) \approx \SINR_{j,y}(t)$ 
for neighboring UAVs $i,\ j \in \calU$
and a UE at $\vc{y}$ on the border of their UAV cells
$\calV_i$ and $\calV_j$ (see also (\ref{eq-def-calC_i}) and (\ref{eq-SINR})). 
Thus, 
by proceeding in the same manner as 
in the proof of Theorem~2.2 in \cite{Cort05}, 
we can prove that (\ref{eq-div-J-add-1}) holds, which completes the proof. 

\begin{thebibliography}{1}
%

\bibitem{Vala14}
K.~P.~ Valavanis and G.~J.~Vachtsevanos,
{\it Handbook of Unmanned Aerial Vehicles.}
Dordrecht: Springer, 2014.

\bibitem{Moza19}
M.~Mozaffari, W.~Saad, M.~Bennis, Y.-H.~Nam, and M.~Debbah, 
``A tutorial on UAVs for wireless networks: Applications, challenges, and open problems," 
{\it IEEE Commun. Surveys Tuts.,} vol.~21, no.~3, pp.~2334--2360, Mar. 2019. 

\bibitem{Bor16b}
R.~I.~Bor-Yaliniz and H.~Yanikomeroglu, 
``The new frontier in RAN heterogeneity: Multi-tier drone-cells," 
{\it IEEE Commun. Mag.,} vol.~54, no.~11, pp.~48--55, Nov. 2016.

\bibitem{Bor16}
R.~I.~Bor-Yaliniz, A.~El-Keyi, and H.~Yanikomeroglu, 
``Efficient 3-D placement of an aerial base station in next generation cellular networks," 
In {\it Proc. IEEE ICC,} May 2016, pp.~1--5.


\bibitem{Hour14}
A.~Al-Hourani, S.~Kandeepan, and S.~Lardner, 
``Optimal LAP altitude for maximum coverage,"
{\it IEEE Wireless Commun. Lett.,} vol.~3, no.~6, pp.~569--572, Dec. 2014.

\bibitem{Hour14b}
A.~Al-Hourani, S.~Kandeepan, and A.~Jamalipour, 
``Modeling air-to-ground path loss for low altitude platforms in urban environments,"
In {\it Proc. IEEE GLOBECOM,} Dec.~2014, pp.~2898-2904. 

\bibitem{Kala16}
E.~Kalantari, H.~Yanikomeroglu, and A.~Yongacoglu, 
``On the number and 3D placement of drone base stations in wireless cellular networks,"
In {\it Proc. IEEE VTC-fall,} Sept.~2016, pp.~1--6. 



\bibitem{Moza16}
M.~Mozaffari, W.~Saad, M.~Bennis, and M.~Debbah, 
``Efficient deployment of multiple unmanned aerial vehicles for optimal wireless coverage," 
{\it IEEE Commun. Lett.,} vol.~20, no.~8, pp.~1647--1650, Aug.~2016. 

\bibitem{Sale15}
S.~A.~M.~Saleh, S.~A.~Suandi, and H.~Ibrahim,
``Recent survey on crowd density estimation and counting for visual surveillance,"
{\it Eng. Appl. Artif. Intell.,} vol.~41, pp.~103--114, May 2015. 

\bibitem{Scha14}
L.~Schauer, M.~Werner, and P.~Marcus,
``Estimating crowd densities and pedestrian flows using Wi-Fi and bluetooth,"
In {\it Proc. MOBIQUITOUS,} Dec.~2014, pp.~171--177. 

\bibitem{Xi14}
W.~Xi, J.~Zhao, X.-Y.~Li, K.~Zhao, S.~Tang, X.~Liu, and Z.~Jiang,
``Electronic frog eye: Counting crowd using WiFi,"
In {\it Proc. IEEE INFOCOM,} Apr.~2014, pp.~361--369.

\bibitem{Kemp03}
D.~Kempe, A.~Dobra, and J.~Gehrke, 
``Gossip-based computation of aggregate information,"
In {\it Proc. IEEE FOCS,} Oct.~2003, pp.~482--491.


\bibitem{Tata17}
T.~Tatarenko and B.~Touri, 
``Non-convex distributed optimization," 
{\it IEEE Trans. Automat. Contr.,} vol.~62, no.~8, pp.~3744--3757, Jan. 2017.

\bibitem{ITUR03}
ITU-R, 
``Rec. p.1410-2 propagation data and prediction methods for the design of terrestrial broadband millimetric radio access systems,"
{\it Series, Radiowave propagation,} 2003.

\bibitem{Alze18}
M.~Alzenad, A.~El-Keyi, and H.~Yanikomeroglu, 
``3-D Placement of an unmanned aerial vehicle base station for maximum coverage of users with different QoS requirements," 
{\it IEEE Wireless Commun. Lett.,} vol.~7, no.~1, pp.~38--41, Feb. 2018.

\bibitem{Moza15}
M.~Mozaffari, W.~Saad, M.~Bennis, and M.~Debbah, 
``Drone small cells in the clouds: Design, deployment and performance analysis," 
In {\it Proc. IEEE GLOBECOM,} Dec.~2015, pp.~1--6. 

\bibitem{Ghan18}
R.~Ghanavi, E.~Kalantari, M.~Sabbaghian, H.~Yanikomeroglu, and A.~Yongacoglu,
``Efficient 3D aerial base station placement considering users mobility by reinforcement learning,"
In {\it Proc.} WCNC, Apr. 2018, pp.~1--6. 


\bibitem{Orfa16}
D.~Orfanus, E.~P.~de~Freitas, and F.~Eliassen, 
``Self-organization as a supporting paradigm for military UAV relay networks,"
{\it IEEE Commun. Lett.,} vol.~20, no.~4, pp.~804--807, Apr. 2016.

\bibitem{Dutt18}
R.~Dutta, L.~Sun, and D.~Pack, 
``A decentralized formation and network connectivity tracking controller for multiple unmanned systems,"
{\it IEEE Trans. Control Syst. Technol.,} vol.~26, no.~6, pp.~2206--2213, Nov. 2018. 


\bibitem{Zhao18b}
H.~Zhao, L.~Mao, and J.~Wei, 
``Coverage on demand: A simple motion control algorithm for autonomous robotic sensor networks,"
{\it Comput. Netw,} vol.~135, pp.~190--200, Apr. 2018.

\bibitem{Zhao18}
H.~Zhao, H.~Wang, W.~Wu, and J.~Wei, 
``Deployment algorithms for UAV airborne networks toward on-demand coverage," 
{\it IEEE J. Sel. Areas Commun.,} vol.~36, no.~9, pp.~2015--2031, Sept. 2018.


\bibitem{Kong17}
L.~Kong, L.~Ye, F.~Wu, M.~Tao, G.~Chen, and A.~V.~Vasilakos, 
``Autonomous relay for millimeter-wave wireless communications,"
{\it IEEE J. Sel. Areas Commun.,} vol.~35, no.~9, pp.~2127--2136, Sept. 2017.
%
\bibitem{Wu18}
H.~Wu, X.~Tao, N.~Zhang, and X.~Shen,
``Cooperative UAV cluster-assisted terrestrial cellular networks for ubiquitous coverage,"
{\it IEEE J. Sel. Areas Commun.,} vol.~36, no.~9, pp.~2045--2058, Sept. 2018.

\bibitem{Zhu18}
Y.~Zhu, G.~Zheng, and M.~Fitch, 
``Secrecy rate analysis of UAV-enabled mmWave networks using Mat\'{e}rn hardcore point processes," 
{\it IEEE J. Sel. Areas Commun.,} vol.~36, no.~7, pp.~1397--1409, Sept. 2018.


\bibitem{Bai15}
T.~Bai and R.~W.~Heath, 
``Coverage and rate analysis for millimeter-wave cellular networks,"
{\it IEEE Trans. Wireless Commun,} vol.~14, no.~2, pp.~1100--1114, Feb. 2015.

\bibitem{Stoy95}D.~Stoyan, W.~S.~Kendall, and J.~Mecke,
{\it Stochastic Geometry and its Applications.} 
Chichester: John Wiley \& Sons, 1995. 
\bibitem{Rasm06}
C.~E.~Rasmussen and C.~K.~I.~Williams, 
{\it Gaussian Processes for Machine Learning.} Cambridge: MIT Press, 2006.

\bibitem{Nedi15}
A.~Nedi\'{c} and A.~Olshevsky, ``Distributed optimization over time-varying directed graphs," 
{\it IEEE Trans. Automat. Contr.,} vol.~60, no.~3, pp.~601--615, Mar. 2015.



%
%
%
%
%
%



%
%







\bibitem{Cort05}
J.~Cort\'{e}s, S.~Martínez, and F.~Bullo,
``Spatially-distributed coverage optimization and control with limited-range interactions,"
{\it ESAIM Cotr. Optim. Ca.,} vol.~11, no.~4, pp.~691--719, Oct. 2005. 

%
%
%






%



%


%












%



%


%
%
%






 


%
















\end{thebibliography}
\end{document}